\definecolor{linkColor}{rgb}{0.0,0.0,0.554}
\definecolor{citeColor}{rgb}{0.0,0.0,0.554}
\definecolor{fileColor}{rgb}{0.0,0.0,0.554}
\definecolor{urlColor}{rgb}{0.0,0.0,0.554}
\definecolor{promptColor}{rgb}{0.0,0.0,0.589}
\definecolor{brkpromptColor}{rgb}{0.589,0.0,0.0}
\definecolor{gapinputColor}{rgb}{0.589,0.0,0.0}
\definecolor{gapoutputColor}{rgb}{0.0,0.0,0.0}
\definecolor{cof}{RGB}{219,144,71}
\definecolor{pur}{RGB}{186,146,162}
\definecolor{greeo}{RGB}{91,173,69}
\definecolor{greet}{RGB}{52,111,72}
\newcommand{\changefont}{%
    \fontsize{8}{8}\selectfont
}
\mathchardef\mhyphen="2D 
\def\BState{\State\hskip-\ALG@thistlm}
\title[Particle dynamics in spherically symmetric electro-vacuum instantons]{Particle dynamics in spherically symmetric electro-vacuum instantons}
\author{Arthur Garnier \orcidlink{0000-0003-4069-3203}}
\address{\newline
Universit\'e de Picardie,
\newline LAMFA (UMR 7352 du CNRS),
\newline 33 rue St Leu,
\newline F-80039 Amiens Cedex 1,
\newline France}
\email{arthur.garnier@math.cnrs.fr}
\theoremstyle{plain}
\newtheorem{prop-def}[prop]{Proposition-Definition}
\newtheorem*{prop*}{Proposition}
\newtheorem*{prop-def*}{Proposition-Definition}
\newtheorem*{propri*}{Property}
\newtheorem*{lem*}{Lemma}
\newtheorem*{theo*}{Theorem}
\newtheorem*{cor*}{Corollary}
\newtheorem*{rem*}{Remark}
\newtheorem*{definition*}{Definition}
\newtheorem*{exemple*}{Example}
\newtheorem*{notation*}{Notation}
\newcommand{\lra}{\longrightarrow}
\newcommand{\ra}{\rightarrow}
\newcommand{\sdp}{\times\kern-.2em\vrule height1.1ex depth-.05ex}
\newcommand{\epi}{\lra \kern-.8em\ra}
\newcommand{\C}{{\mathbb C}}
\newcommand{\R}{{\mathbb R}}
\newcommand{\longto}{\longrightarrow}
\newcommand{\Sph}{\mathbb{S}}
\newcommand{\Pro}{\mathbb{P}}
\DeclareMathOperator\cotan{cotan}
\DeclareMathOperator\bigo{\mathbf{O}}
\newlength\@SizeOfCirc%
\newcommand{\CircleArrowRight}[1]{%
    \setlength{\@SizeOfCirc}{\maxof{\widthof{#1}}{\heightof{#1}}}%
    \tikz [x=1.0ex,y=1.0ex,line width=.12ex]%
        \draw [->,anchor=center]%
            node (0,0) {#1}%
            (0,0.8\@SizeOfCirc) arc (85:-240:0.8\@SizeOfCirc);%
}%
\newcommand{\CircleArrowLeft}[1]{%
    \setlength{\@SizeOfCirc}{\maxof{\widthof{#1}}{\heightof{#1}}}%
    \tikz [x=1.0ex,y=1.0ex,line width=.12ex]%
        \draw [<-,anchor=center]%
            node (0,0) {#1}%
            (0,0.8\@SizeOfCirc) arc (85:-240:0.8\@SizeOfCirc);%
}%
\tikzset{
    set arrow inside/.code={\pgfqkeys{/tikz/arrow inside}{#1}},
    set arrow inside={end/.initial=>, opt/.initial=},
    /pgf/decoration/Mark/.style={
        mark/.expanded=at position #1 with
        {
            \noexpand\arrow[\pgfkeysvalueof{/tikz/arrow inside/opt}]{\pgfkeysvalueof{/tikz/arrow inside/end}}
        }
    },
    arrow inside/.style 2 args={
        set arrow inside={#1},
        postaction={
            decorate,decoration={
                markings,Mark/.list={#2}
            }
        }
    },
}
\subjclass[2020]{Primary 83C10, 83-04, 83C25; Secondary 83C22, 83C15, 83C20}
\date{\today}
\begin{document}

\begin{abstract}
In this paper, we study the geodesic motion in spherically symmetric electro-vacuum Euclidean solutions of the Einstein equation. There are two kinds of such solutions: the Euclidean Reissner--Nordstr\"{o}m (ERN) metrics, and the Bertotti--Robinson-like (BR) metrics, the latter having constant Kretschmann scalar.

First, we derive the motion equations for the ERN spacetime and we generalize the results of Battista--Esposito, showing that all orbits in as ERN spacetime are unbounded if and only if it has an event horizon. We also obtain the Weierstrass form of the polar radial motion, providing an efficient tool for numerical computations.

We then study the angular deflection of orbits in the Euclidean Schwarzschild spacetime which, in contrast to the Lorentzian background, can be either positive or negative. We observe the presence of a null and a maximal deflection rings for particles with velocity at infinity $v>1$ and we give approximate values for their size when $v\gtrsim1$.

For BR spacetimes, we obtain analytic solutions for the radial motion in proper length, involving (hyperbolic) trigonometric functions and we deduce that orbits either exponentially go to the singularity or are periodic.

Finally, we apply the previous results and use algorithms related to Weierstrass' elliptic functions to produce a Python code to plot orbits of the spacetimes ERN and BR, and draw ``shadows'' of the first ones, as it was already done before for classical black holes.
\end{abstract}

\maketitle


\section*{Introduction and motivation}

\indent Instantons (or pseudoparticles) were originally defined in \cite{belavin-et-al75} as solutions of the (classical) Yang--Mills field equations, which are non-singular on some section of a complexified spacetime. By analogy, a \textit{gravitational instanton} was defined in \cite{hawking77} to be a solution of the classical Einstein field equation, which is positive-definite (i.e. Riemannian) on some section of a complexified spacetime. Such metrics were first introduced in quantum gravity by Hartle and Hawking \cite{hartle-hawking} in order to make some path integral converge, hence defining the so-called \textit{Hartle--Hawking propagator}. Quoting \cite{gibbons-hawking79}, after the development of instantons in Yang--Mills theory and because (super)gravity is a gauge theory, it seems reasonable to expect gravitational instantons to play a similar role in gravity as instantons do in quantum field theory. For general discussions on gravitational instantons, see \cite{hawking1979,gibbons_centenary,eguchi-gilkey-hanson80,elster84,esposito}.

Since their introduction, gravitational instantons and their interactions with gauge instantons have been a subject of deep interest \cite{tekin02,elnaschie04,mosna-tavares,oh-park-yang11}. More recently, the existence and uniqueness of \textit{toric instantons} have been established in \cite{kunduri-lucietti22}. Moreover, \textit{purely Euclidean instantons} (i.e. the corresponding complex spacetime does not admit any Lorentzian section) have been introduced in \cite{chen-teo11} and thoroughly described in \cite{aksteiner-andersson21}. It is worth mentioning that the gravitational instantons with \textit{positive cosmological constant} were fully described and classified in \cite{page78}.

Besides quantum gravity, in the early \textit{geometric models of matter} introduced in \cite{atiyah-manton-schroers12}, the Fubini--Study metric on the projective plane $\C\Pro^2$ has been proposed as a (compact) model for the spacetime surrounding a \textit{neutron}. Later in \cite{atiyah-franchetti-schroers14}, the authors rather propose the Euclidean Schwarzschild geometry as a model for the neutron. As explained in \cite{jante15}, this has been generalized to other spin-$\tfrac12$-particles such as the proton and the electron, for which the Taub--bolt and Taub--NUT instantons were respectively given as candidates. Moreover, interesting uniqueness results on Euclidean Schwarzschild and Taub--NUT instantons were obtained in \cite{mars99}. These proposals further motivate the investigation of gravitational instantons and, in particular, the study of the geodesic dynamics in such spaces.

Geodesic motion in gravitational instantons has started more than thirty years ago, with the pioneer work \cite{anderson90}, focusing on closed geodesics in compact instantons, with applications in the determination of their injectivity radius. The general geodesic dynamics in (generalized) Taub--NUT instantons has been detailed in \cite{visinescu93}. More recently, the case of Kerr--Newman instantons is the topic of \cite{lindberg-rayan18}, while instantons of Eguchi--Hanson type were studied in \cite{yang-zhang23}. Finally, we mention that the integrability of the conformal geodesic flow on spherically symmetric instantons motivates the work \cite{dunajski-tod22}.

In the present paper, an \textit{instanton} will designate a Riemannian solution of the Maxwell--Einstein equations on a 4-dimensional manifold\footnote{In contrast to \cite{hawking77}, we do not assume that the curvature vanishes at large distances.}. We investigate the geodesic motion in such spaces, which we assume to be \textit{spherically symmetric}. In particular, we will apply the theory of Weierstrass elliptic functions to the radial motion, as it was already done in the Lorentzian framework; see \cite{gibbons-vyska,cieslik-mach22}. We also study the gravitational lensing of trajectories, as done for photons and massive particles in Reissner--Nordstr\"{o}m (resp. Kerr--Newman) spacetimes in \cite{pang-jia} (resp. in \cite{he-lin16}). Our methods can also be compared to the more recent work \cite{viththani24}, where tidal forces are also investigated. As explained below, one of the main aims of the present paper is to highlight some important dynamical differences between the Euclidean and the Lorentzian backgrounds.

The simplest example of a gravitational instanton is the Euclidean Schwarzschild metric \cite{hartle-hawking}, given in Schwarzchild coordinates $(\tau,r,\theta,\phi)$ on $\mathcal{M}:=\R\times]2M;+\infty[\times\Sph^2\simeq\R^2\times\Sph^2$ by
\begin{equation}\label{ES}\tag{ES}
{\rm d}s^2=\left(1-\frac{2M}{r}\right){\rm d}\tau^2+\left(1-\frac{2M}{r}\right)^{-1}{\rm d}r^2+r^2({\rm d}\theta^2+\sin^2\theta{\rm d}\phi^2),
\end{equation}
($M\ge0$ being the mass of the central body) and the link with the Lorentzian Schwarzschild metric is given by setting the Euclidean time $\tau=it$, with $t$ being the (Lorentzian) coordinate time. In other words, the Euclidean Schwarzschild metric is obtained from the Lorentzian one by applying a \textit{Wick rotation}. As for any metric, it is natural to study the geodesic motion associated to it. Regarding the metric \eqref{ES} above, this question was addressed thoroughly in \cite{battista-esposito22}.

As suggested in \cite[\S II]{gibbons-hawking77}, one can also look for the Reissner--Nordstr\"{o}m analogue of the Euclidean Schwarzschild solution; a metric that was used in \cite[\S 4]{mellor-moss89} and \cite[\S II.B]{monteiro_santos}, for instance. Assuming the central body has an electric charge $Q\in\R$, the Euclidean Reissner--Nordstr\"{o}m (ERN) metric is given by
\begin{equation}\label{ERN}\tag{ERN}
{\rm d}s^2=\left(1-\frac{2M}{r}+\frac{Q^2}{r^2}\right){\rm d}\tau^2+\left(1-\frac{2M}{r}+\frac{Q^2}{r^2}\right)^{-1}{\rm d}r^2+r^2({\rm d}\theta^2+\sin^2\theta{\rm d}\phi^2).
\end{equation}

In this paper, we generalize the approach of \cite{battista-esposito22} to this metric and study the motion of a test particle in the ERN spacetime. Specifically, we prove that the method of \cite[\S 3.1]{gibbons-vyska} still applies to the ERN metric, hence obtaining analytic solutions for (non-purely radial) geodesics in terms of Weierstrass' elliptic functions. As we will see, one of the remarkable results obtained in \cite{battista-esposito22} extends to the ERN spacetime with horizon (i.e. such that $Q^2\le M^2$), namely the fact that the energy of an exterior geodesic is confined in the open interval $]-1,1[$. In particular, no elliptic-like geodesics exist in the sub-extremal case. This fails in the super-charged case $Q^2>M^2$, where arbitrary high energy is allowed and attained by a circular geodesic. As mentioned in \cite{battista-esposito22}, these facts show that these Riemannian solutions present substantial differences in their dynamics, when compared to their usual relativistic avatars.

Moreover, we will see how the polar motion equation simplifies in the Schwarzschild case $Q=0$ and we retrieve the results from \cite{battista-esposito22} using only the elementary geometric properties of the real elliptic curve describing the phase portrait in (affinely transformed) Binet variable.

Another remarkable dynamical distinction between Lorentzian and Euclidean Schwarzschild spacetimes relies in the gravitational deflection of orbits. As is well-known, given an orbit coming from and to infinity, the deflection angle $\delta\phi$ (in the motion plane) between its two asymptotic directions is always positive in Schwarzschild geometry, for photons as well as for massive particles. This means that test-particles can only be attracted by the central body. However, in Euclidean Schwarzschild geometry, the deflection angle can be positive or negative, depending on the velocity at infinity $v$ and the perihelion $r_{\rm min}$ of the orbit; in this geometry, particles can be attracted or repelled by the central mass. More precisely, we observe that, at fixed $v<1$, we have $\delta\phi<0$ for all values of $r_{\rm min}$, while for $v>1$, the deflection $\delta\phi$ vanishes (resp. is maximal positive) at some perihelion $r_{\rm min}=\rho_0$ (resp. $r_{\rm min}=\rho_{\rm max}$). In \eqref{Rm0} and \eqref{Rm1}, we give approximate values for $\rho_0$ and $\rho_{\rm max}$ when $v\gtrsim1$. The existence of $\rho_{\rm max}$ gives rise to a visible maximal deflection ring in the shadow of such a spacetime. As already said, the fact the $\delta\phi<0$ at small perihelia indicates that particles are repelled by the central mass and thus the horizon becomes invisible to the observer, see Figures \ref{repulsive} and \ref{shadowsERN}.

The only other possible type of spherically symmetric instanton is given by a \textit{Bertotti--Robinson-like} metric, whose line element has the form
\begin{equation}\label{BRr}\tag{BR}
{\rm d}s^2=Q^2\left(\frac{1-2mr+q^2r^2}{r^2}{\rm d}\tau^2+\frac{{\rm d}r^2}{r^2(1-2mr+q^2r^2)}+{\rm d}\theta^2+\sin^2\theta{\rm d}\phi^2\right),
\end{equation}
where $Q\ne0$ and $m,q\in\R$ are some constants. This is the Euclidean analogue of the general Bertotti--Robinson electro-vacuum Lorentzian solution. As in the Lorentzian case, this metric essentially differs from \eqref{ERN} is this sense that its Kretschmann invariant is constant. To the knowledge of the author, the dynamics of this solution, Euclidean or Lorentzian, doesn't appear in the literature. This is treated in \S \ref{dynamics_BR}, where we provide a full analytic solution of the geodesic equation, in terms of (hyperbolic) trigonometric functions; see \eqref{analytic_BR} and \eqref{BR_affine}.

Finally, the efficient algorithms available to approximate Weierstrass' elliptic functions \cite{carlson,coquereaux} are used to produce a Python code\footnote{available at \url{https://github.com/arthur-garnier/euclidean_orbits_and_shadows.git}}, designed to draw orbits in the two types of instantons discussed here, as well as to obtain the ``shadow'' of an ERN space by ray-tracing, as it was already done for black holes in \cite{agol-dexter,gyoto,GRay,PYYY,osiris}, for instance. Since there are no null geodesics in Euclidean geometry, photons are replaced by particles with a velocity at infinity that should be provided by the user. Conformally to what was mentioned above concerning the deflection angle, we observe the presence of a maximal deflection ring when $v>1$ and we notice that the horizon is not visible. In particular, the optical difference between the cases $Q^2\lesssim M^2$ and $Q^2\gtrsim M^2$ is not as obvious as in the Lorentzian background. We still observe that the size of the maximal deflection ring diminishes as the charge increases, see Figure \ref{shadows_nice}.

The layout of the paper is as follows: first, we state that the metrics \eqref{ERN} and \eqref{BRr} are the only spherically symmetric solutions of the Einstein--Maxwell field equation with complex vector potential $A_\mu=-iQr^{-1}{\rm d}\tau$. The detailed proof of this result can be found in the Appendix \ref{proof_unicity}. Then, we derive the motion equations and the motion constants for the metric \eqref{ERN}, and we prove that the energy $E$ of a geodesic satisfies $E^2<1$ when $Q^2<M^2$, as mentioned above. We then obtain the Weierstrass equation from of the polar radial motion equation and we investigate the particular case where $Q=0$. Next, we study the gravitational deflection in Euclidean Schwarzschild spacetime and provide the aforementioned approximations for the deflection angles, as well as for the null and maximal deflection rings. Concerning the Bertotti--Robinson family \eqref{BRr}, we derive the motion equations and obtain analytic solutions with (hyperbolic) trigonometric functions. Finally, we quickly explain how the Python code is constructed and we finish with some figures illustrating our results and programs.

\section{The two types of spherically symmetric electro-vacuum instantons}\label{unicity_section}

In this section, we state the unicity result for spherically symmetric electro-vacuum instantons. We systematically use Stoney units where $G=c=4\pi\epsilon_0=1$. Let $(M,Q)\in\R_+\times\R$ and consider the numbers $r_+,r_-$ defined by 
\[r_{\pm}:=\left\{\begin{array}{ll}M\pm\sqrt{M^2-Q^2} & \text{if }Q^2<M^2, \\ 0 & \text{otherwise.}\end{array}\right.\]
Let also $\mathcal{M}:=\R\times ]r_+,+\infty[\times\Sph^2\simeq\R^2\times\Sph^2$, with coordinates $x^\mu=(\tau,r,\theta,\phi)$, the pair $(\theta,\phi)$ describing spherical coordinates on $\Sph^2$. Since we work on the Euclidean section, we restrict our study to the open subset $r>r_+$, just as in \cite{esposito}. This is a reasonable restriction, as we are interested by the exterior region of the spacetime.

We denote by ${\rm d}\Omega^2:={\rm d}\theta^2+\sin^2\theta{\rm d}\phi^2$ the usual round metric on $\Sph^2$ and we have the following result, the detailed proof of which can be found in Appendix \ref{proof_unicity}.

\begin{adjustwidth}{1cm}{1cm}
\emph{Let ${\rm d}s^2=g_{\mu\nu}{\rm d}x^\mu{\rm d}x^\nu$ be a spherically symmetric solution of the Einstein--Maxwell equation with complex vector potential
\[A_\mu=-iQr^{-1}{\rm d}\tau,\]
defined for $r\gg0$.}

\emph{If the Kretschmann invariant $K=R^{\alpha\beta\mu\nu}R_{\alpha\beta\mu\nu}$ associated to $g_{\mu\nu}$ is independent of $r$, then there are constants $m,q\in\R$ such that the metric takes the Bertotti--Robinson form
\[{\rm d}s^2=Q^2\left[\frac{1-2mr+q^2r^2}{r^2}{\rm d}\tau^2+\frac{{\rm d}r^2}{r^2(1-2mr+q^2r^2)}+{\rm d}\Omega^2\right]\]
in which case the Kretschmann invariant is $K=8Q^{-4}$.}

\emph{Otherwise, there are coordinate transformations of the form $\tilde{\tau}=C\tau$ and $R=r/(Ar+B)$ (with $A\in\R$ and $B,C\in\R^*$) as well as constants $\tilde{M},\tilde{Q}\in\R$ such that $g_{\mu\nu}$ takes the Reissner--Nordstr\"{o}m form
\[{\rm d}s^2=\left(1-\frac{2\tilde{M}}{R}+\frac{\tilde{Q}^2}{R^2}\right){\rm d}{\tilde{\tau}}^2+\left(1-\frac{2\tilde{M}}{R}+\frac{\tilde{Q}^2}{R^2}\right)^{-1}{\rm d}R^2+R^2{\rm d}\Omega^2,\]
whose Kretschmann invariant is $K=8R^{-8}(6\tilde{M}^2R^2-12\tilde{M}\tilde{Q}^2R+7\tilde{Q}^4)$.}
\end{adjustwidth}

We make the following observations:
\begin{itemize}
\item The Ricci scalar of any of the above solutions vanishes.
\item The proof shows in particular that for a vector potential $A_\tau=\mathcal{Q}r^{-1}$, a spherically symmetric solution of the field equation is Euclidean (resp. Lorentzian) if and only if $\mathcal{Q}$ is purely imaginary (resp. is real).
\item Recalling the notation from the proof, we observe that in the Reissner--Nordstr\"{o}m case, the new potential is
\[A'_\mu=-i\tilde{Q}R^{-1}{\rm d}\tilde{\tau}=-iQ(r^{-1}+\alpha\beta^{-1}){\rm d}\tau=A_\mu-i\nabla_\mu f,\]
where $f:=Q\alpha\beta^{-1}\tau$. Therefore, the coordinate transformation $(\tau,r)\mapsto(\tilde{\tau},R)$ induces a gauge transformation $A_\mu\mapsto A_\mu-i\nabla_\mu f$.
\item The second metric of the statement with $m=q=0$ gives the Euclidean version of the original Bertotti--Robinson line element derived in \cite{robinson59,bertotti59}
\[{\rm d}s^2=\frac{Q^2}{r^2}\left[{\rm d}\tau^2+{\rm d}r^2+r^2{\rm d}\Omega^2\right].\]
Observe moreover that in Binet variable $u=1/r$, the general Bertotti--Robinson metric has an even simpler form
\[{\rm d}s^2=Q^2\left[(u^2-2mu+q^2){\rm d}\tau^2+\frac{{\rm d}u^2}{u^2-2mu+q^2}+{\rm d}\Omega^2\right].\]
\end{itemize}

If now ${\rm d}s^2=g_{\mu\nu}{\rm d}x^\mu{\rm d}x^\nu=g_{\mu\mu}({\rm d}x^\mu)^2$ is an \textit{asymptotically flat} spherically symmetric electro-vacuum instanton, then the Kretschmann scalar should vanish as $r\to+\infty$, thus only the Reissner--Nordstr\"{o}m form from the previous theorem is allowed, with $\tilde{\tau}=C\tau$ and $R=r/(Ar+B)$. But the asymptotic conditions
\[\lim_{r\to+\infty}(Ar+B)^{-2}=\lim_{r\to +\infty}\frac{g_{\theta\theta}}{r^2}=1=\lim_{r\to+\infty}g_{\tau\tau}=\lim_{r\to+\infty}C^2(1-2\tilde{M}/R+\tilde{Q}^2/R^2)\]
impose $R=r$ and $\tilde{\tau}=\tau$. Finally, the electromagnetic tensor $F_{\mu\nu}$ has $iF_{\tau r}=\tilde{Q}R^{-2}=Qr^{-2}$, so that $\tilde{Q}=Q$ and we have obtained the following Euclidean analogue of the Birkhoff--Hoffmann theorem:
\begin{adjustwidth}{1cm}{1cm}
\emph{The Euclidean Reissner--Nordstr\"{o}m metric is the only spherically symmetric, asymptotically (Euclidean) flat metric satisfying the electro-vacuum Einstein--Maxwell equation associated to the complex vector potential
\[A_\mu:=-iQr^{-1}{\rm d}\tau.\]
More precisely, if ${\rm d}s^2=g_{\mu\nu}{\rm d}x^\mu{\rm d}x^\nu$ is a such a metric, defined for $r\gg0$, then there exists a constant $\tilde{M}\in\R$ such that
\[{\rm d}s^2=\left(1-\frac{2\tilde{M}}{r}+\frac{Q^2}{r^2}\right){\rm d}\tau^2+\left(1-\frac{2\tilde{M}}{r}+\frac{Q^2}{r^2}\right)^{-1}{\rm d}r^2+r^2{\rm d}\theta^2+r^2\sin^2\theta{\rm d}\phi^2.\]
In particular, the metric \eqref{ERN} is the only spherically symmetric solution of the Einstein--Maxwell equation associated to $A_\mu$ defined on $\mathcal{M}$, which reduces to the Euclidean Schwarzschild metric \eqref{ES} when $Q\to0$.}
\end{adjustwidth}

\section{Geodesic motion in Euclidean Reissner--Nordstr\"{o}m instantons}

\subsection{Motion equations and energy of orbits}\label{energyy}
Recall the notation from the beginning of the previous section and consider a non-constant geodesic $\gamma=(\tau,r,\theta,\phi)$ in $\mathcal{M}=\R\times ]r_+,+\infty[\times\Sph^2$ for the metric \eqref{ERN}, with affine parameter $\lambda$. We will analyse the geodesic equation in the same fashion as in \cite{battista-esposito22}.

By spherical symmetry, we may assume that $\theta\equiv\pi/2$ and letting
\[\Delta(r):=1-\frac{2M}{r}+\frac{Q^2}{r^2},\]
the relativistic Lagrangian $\mathcal{L}=\tfrac12 g_{\mu\nu}\dot{\gamma}^\mu\dot{\gamma}^\nu$ reads
\[2\mathcal{L}=\Delta(r)\dot{\tau}^2+\Delta(r)^{-1}\dot{r}^2+r^2\dot{\phi}^2.\]
Thus, the temporal and angular Euler--Lagrange equations provide constants $C,J\in\R$ such that
\begin{equation}\label{tandphi}
\dot{\tau}=\frac{C}{\Delta(r)},~\dot{\phi}=\frac{J}{r^2}
\end{equation}
and thus the scalar
\[\mathcal{H}:=2\mathcal{L}=\Delta(r)^{-1}(C^2+\dot{r}^2)+\frac{J^2}{r^2}>0\]
is conserved along $\gamma$ and the proper length $s$ satisfies ${\rm d}s^2=\mathcal{H}{\rm d}\lambda^2$ so that we get 
\begin{equation}\label{mot_ERN}
\left(\frac{{\rm d}\phi}{{\rm d}s}\right)^2=\frac{\dot{\phi}^2}{\mathcal{H}}=\frac{L^2}{r^4},~\left(\frac{{\rm d}r}{{\rm d}s}\right)^2=\frac{\dot{r}^2}{\mathcal{H}}=\Delta(r)\left(1-\frac{L^2}{r^2}\right)-E^2.
\end{equation}
where $E$ (resp. $L$) is the \textit{energy per unit mass} (resp. \textit{angular momentum per unit mass}) of $\gamma$, defined by analogy with the Lorentzian framework as the proper temporal (resp. azimuthal) conjugate momentum 
\[E:=p_{\tau(s)}=\tfrac{1}{\sqrt{|\mathcal{H}|}}p_{\tau(\lambda)}=\frac{g_{\tau\mu}\dot{\gamma}^\mu}{\sqrt{\mathcal{H}}}=\frac{\Delta(r)\dot{\tau}}{\sqrt{\mathcal{H}}}=\frac{C}{\sqrt{\mathcal{H}}},~\left(\text{resp. } L:=\tfrac{1}{\sqrt{|\mathcal{H}|}}p_\phi=\frac{J}{\sqrt{\mathcal{H}}}\right)\]

For the rest of this section, we assume that $\gamma$ is non-purely radial and we denote the differentiation with respect to the Euclidean time $\tau$ with a dot.

We want to apply the Weierstrass analysis of this equation and since it has degree 4, one first needs to choose a real root of the quartic right-hand side. Such a real root is guaranteed to exist as soon as $E^2<1$, a property that we shall prove to always hold, provided that the metric \eqref{ERN} presents an event horizon (that is, when $Q^2<M^2$). To do this, we need expressions for the energy and angular momentum, as functions of the initial conditions $\gamma(0)=:(\tau_0,r_0,\pi/2,\phi_0)$ and $\dot{\gamma}(0)=:(1,\dot{r}_0,0,\dot{\phi}_0)$. By symmetry, we may assume that $\tau_0=\phi_0=0$ and if we let $\alpha:=\left.\frac{{\rm d}\tau}{{\rm d}\lambda}\right|_{\lambda=0}$, then the constant $C$ reads $C=\alpha\Delta(r_0)$ and we also have $J=\alpha r_0^2\dot{\phi}_0$. Then,
\[\mathcal{H}=\Delta(r_0)^{-1}\left(C^2+\left(\frac{{\rm d}r}{{\rm d}\lambda}\right)^2\right)+\frac{J^2}{r_0^2}=\alpha^2\left(\Delta(r_0)+\Delta(r_0)^{-1}\dot{r}_0^2+r_0^2\dot{\phi}_0^2\right),\]
so that we arrive at the following expressions for the energy and angular momentum:
\begin{equation}\label{mot_cons}
E=\frac{\Delta(r_0)}{\sqrt{\Delta(r_0)+\Delta(r_0)^{-1}\dot{r}_0^2+r_0^2\dot{\phi}_0^2}},~L=\frac{r_0^2\dot{\phi}_0}{\sqrt{\Delta(r_0)+\Delta(r_0)^{-1}\dot{r}_0^2+r_0^2\dot{\phi}_0^2}}.
\end{equation}

We can now state the main result of this section, generalizing the results from \cite{battista-esposito22} to the charged case. It implies in particular that the metric \eqref{ERN} features an event horizon exactly when there is no bounded orbit. The proof, relying on a tedious analysis of a real polynomial, is given in Appendix \ref{proof_energy}.
\begin{adjustwidth}{1cm}{1cm}\label{brun}
\emph{If $Q^2\le M^2$, then any (exterior) non-constant geodesic $\gamma$ for the metric \eqref{ERN} has $E^2<1$. Otherwise, there are circular orbits with arbitrary energy.}
\end{adjustwidth}

\subsection{Reduction of the polar radial equation to Weierstrass' form}\label{weier_form}
Let $\gamma=(\tau,r,\phi)$ be a non-purely radial equatorial geodesic. Then $J\ne0$ so that the map $s\mapsto\phi(s)$ is a diffeomorphism onto its image and from \eqref{mot_ERN} we find the polar radial equation
\begin{equation}\label{polmot_ERN}
\left(\frac{{\rm d}r}{{\rm d}\phi}\right)^2=\frac{1-E^2}{L^2}r^4-\frac{2M}{L^2}r^3+\left(\frac{Q^2}{L^2}-1\right)r^2+2Mr-Q^2=:F(r)
\end{equation}

In this section, the dot denotes differentiation with respect to the polar variable $\phi$. We use the same trick as in \cite[\S 3.1]{gibbons-vyska} to reduce the degree of the above equation and then re-write it in Weierstrass form. Let $\overline{r}\in\C$ be a root of the quartic $F$ (which, in view of the result from \S \ref{energyy}, is guaranteed to be real positive when $Q^2\le M^2$) and consider the shifted Binet variable 
\[u:=\frac{1}{r-\overline{r}}.\]
Then, the equation \eqref{polmot_ERN} becomes
\begin{align*}
\dot{u}^2=&\frac{\dot{r}^2}{(r-\overline{r})^4}=u^4F(\overline{r}+1/u) \\
=&\frac{1}{L^2}\left[2(2(1-E^2)\overline{r}^3-3M\overline{r}^2+(Q^2-L^2)\overline{r}+ML^2)u^3\right.\\
&\left.+(6(1-E^2)\overline{r}^2-6M\overline{r}+Q^2-L^2)u^2+2(2(1-E^2)\overline{r}-M)u+1-E^2\right]
\end{align*}
It is now straightforward to re-write this in Weierstrass form. Indeed, we can choose a root $\overline{r}\in\C$ of the quartic
\[\frac{1-E^2}{L^2}r^4-\frac{2M}{L^2}r^3+\left(\frac{Q^2}{L^2}-1\right)r^2+2Mr-Q^2=0,\]
which we can choose to be real positive if $Q^2\le M^2$. If we let
\begin{equation}\label{weierP1}\tag{5a}
\left\{\begin{array}{l}
\delta=\frac{1-E^2}{L^2},\\[.5em]
\gamma=2\left(2\delta\overline{r}-\frac{M}{L^2}\right),\\[.5em]
\beta=6\overline{r}\left(\delta\overline{r}-\frac{M}{L^2}\right)+\frac{Q^2}{L^2}-1,\\[.5em]
\alpha=2\left(2\delta\overline{r}^3-\frac{3M}{L^2}\overline{r}^2+\left(\frac{Q^2}{L^2}-1\right)\overline{r}+M\right),\end{array}\right.~~\text{and}~~\left\{\begin{array}{l} g_2:=\frac14\left(\frac{\beta^2}{3}-\alpha\gamma\right),\\[.5em]
g_3:=\frac18\left(\frac{\alpha\beta\gamma}{6}-\frac{\alpha^2\delta}{2}-\frac{\beta^3}{27}\right),\\[.5em]
\wp:=\frac{\alpha}{4(r-\overline{r})}+\frac{\beta}{12},\end{array}\right.
\end{equation}
then the function $\wp$ satisfies the Weierstrass equation
\begin{equation}\label{weierP2}\tag{5b}
\dot{\wp}^2=4\wp^3-g_2\wp-g_3.
\end{equation}
In other words, the polar radial motion is given by
\begin{equation}\label{weierP3}\tag{5c}
r(\phi)=\overline{r}+\frac{\alpha}{4\wp(\phi)-\beta/3},
\end{equation}
where $\wp=\wp_{g_2,g_3}$ is the Weierstrass function associated to the pair $(g_2,g_3)\in\C^2$.

More practically, given an initial radius $r_0:=r(\phi_0)$, we have to find some $z_0\in\C$ such that $\wp(z_0)=\frac{\alpha}{4(r_0-\overline{r})}+\frac{\beta}{12}$, a task that can be achieved using Carlson's integrals \cite{carlson}
\begin{equation}\label{carlsonRF}\tag{$R_F$}
R_F(x,y,z):=\frac12\int_0^\infty\frac{{\rm d}\zeta}{\sqrt{(\zeta+x)(\zeta+y)(\zeta+z)}}.
\end{equation}
Then, the equation \eqref{weierP3} can be recast in the following form
\begin{equation}\label{weierP3_bis}\tag{5c'}
r(\phi)=\overline{r}+\frac{\alpha}{4\wp_{g_2,g_3}(z_0+\phi)-\beta/3},\text{ with }z_0:=R_F(\wp_0-z_1,\wp_0-z_2,\wp_0-z_3)\in\C,
\end{equation}
where $z_{1,2,3}\in\C$ are the roots of the Weierstrass cubic $4z^3-g_2z-g_3$ and $\wp_0:=\frac{\alpha}{4(r_0-\overline{r})}+\frac{\beta}{12}$.

The main advantage of this formulation is that the integrals \eqref{carlsonRF} can be approximated efficiently by the Carlson algorithm \cite[\S 2]{carlson} and we can approach $\wp$ using the Coquereaux--Grossmann--Lautrup algorithm \cite[\S 3]{coquereaux}. This is the method we use to approximate ERN orbits and produce a Python code.

\addtocounter{equation}{1}

\subsection{The special case of the Euclidean Schwarzschild geometry}\label{schwarzschild}
When $Q=0$, we have $F(0)=0$ so that we may take $\overline{r}=0$ so that the polar equation in Binet variable simplifies to
\[\dot{u}^2=u^4F(1/u)=2Mu^3-u^2-\frac{2M}{L^2}u+\frac{1-E^2}{L^2}\]
which is \cite[equation (2.16)]{battista-esposito22}. Therefore, the Weierstrass form
\[\dot{\wp}^2=4\wp^3-g_2\wp-g_3\]
is obtained by letting $\wp:=M/(2r)-1/12$, as well as
\begin{equation}\label{ges}\tag{$g_{12}^{\rm ES}$}
g_2:=\frac{1}{12}+\frac{M^2}{L^2}\text{ and }g_3:=\frac{1}{216}-\frac{M^2}{12L^2}(2-3E^2).
\end{equation}
These expression can be compared to the Lorentzian Schwarzschild case, where for a test-particle of mass $\mu=2\mathcal{L}$ (twice the Schwarzschild Lagrangian), the constants $g_2$ and $g_3$ read (cf \cite[\S 4, p. 84]{hagihara})
\begin{equation}\label{gls}\tag{$g_{12}^{\rm LS}$}
g_2=\frac{1}{12}+\frac{\mu M^2}{L^2}\text{ and }g_3=\frac{1}{216}-\frac{M^2}{12L^2}(2\mu+3E^2).
\end{equation}

\textit{Therefore, one may view a Euclidean Schwarzschild geodesic as a space-like Lorentzian Schwarzschild geodesic with complex energy $E_{{\rm Euc}}=iE_{{\rm Lor}}$.} Notice that this last equality is expected since the energy is defined as the temporal momentum and because the Euclidean time $\tau$ and the Lorentzian time $t$ are related by the relation $\tau=it$.

The Weierstrass formulation also permits to derive a shorter proof of the fact that $E<1$ for every geodesic $\gamma$ in Euclidean Schwarzschild geometry, with initial radius $r_0>r_+=2M$. As above, we may simplify the notation by rescaling the radius and assuming that $M=1$. If $L=0$, then the second equation \eqref{mot_ERN} reduces to $({{\rm d}r}/{{\rm d}s})^2=1-E^2-{2}/{r}$ so $1-E^2\ge 2/r>0$, as claimed. Now if $L\ne0$, then we may use Weierstrass' form and the discriminant of the equation $4z^3-g_2z-g_3$ reads
\[\Delta:=16(g_2^3-27g_3^2)=L^{-6}\left[(1-E^2)L^4-(27E^4-36E^2+8)L^2+16\right]=\frac{1-E^2}{L^2}+\bigo\left(\frac{1}{L^4}\right),\]
so that if, for the sake of contradiction, we assume $E^2>1$, then $\Delta<0$ for $L\gg0$, something which can be achieved by rescaling the initial azimuthal angular velocity. Therefore, we assume that $\Delta<0$ and look for an absurdity.

Consider the Weierstrass cubic $\mathfrak{q}(x):=4x^3-g_2x-g_3$; then the phase portrait in Weierstrass variable $x=1/(2r)-1/12$ describes a portion of the (real) elliptic curve
\[\mathcal{E}_{\mathfrak{q}}:=\{(x,y)\in\R^2~|~y^2=\mathfrak{q}(x)\}.\]
Since $\lim_{x\to+\infty}\mathfrak{q}(x)=+\infty$, we have $\mathcal{E}_\mathfrak{q}\cap\{x=x_0\}\ne\emptyset$ for $x_0\gg0$ but because $\mathfrak{q}(1/6)=-(E/2L)^2<0$, the non-compact connected component of $\mathcal{E}_\mathfrak{q}$ lies in the open half-plane $\{x>1/6\}$ (which corresponds in radial variable to $\{r<2\}$), so that the considered phase portrait cannot describe a portion of this component. However, to say that $\Delta<0$ amounts to say that $\mathcal{E}_\mathfrak{q}$ is connected, a contradiction.

At this point, we know that $\Delta\ge0$. If $\Delta>0$, then the elliptic curve $\mathcal{E}_\mathfrak{q}$ has an additional compact connected component and since $\mathfrak{q}(-1/12)=(1-E^2)/(4L^2)>0$, this component intersects the subset $\{x=-1/12\}$. Therefore, the corresponding orbit is indeed unbounded. In the case where $\Delta=0$, the curve is singular but connected so that the orbit imposes $0\le\mathfrak{q}(1/6)=-(E/2L)^2\le0$ so $r=2$, a new contradiction. We summarize the discussion as follows:
\begin{adjustwidth}{1cm}{1cm}
\emph{Any Euclidean Schwarzschild orbit has $E^2<1$, and is unbounded. Moreover, the phase portrait of a non-radial equatorial Euclidean Schwarzschild orbit in Weierstrass variable $\wp=M/(2r)-1/12$ (with polar argument) describes a portion of the unique compact connected component of the associated (real) elliptic curve $\mathcal{E}$, included in $\mathcal{E}\cap\{-1<12\wp<2\}$. In particular, the discriminant of $\mathcal{E}$ is positive.}
\end{adjustwidth}

\section{Gravitational deflection of Euclidean Schwarzschild orbits}\label{deflection}

Consider an equatorial orbit $\gamma=(\tau,r,\phi)$, with energy $-1<E<1$ and angular momentum $L\in\R^*$. Since $\gamma$ is unbounded, we may consider its \textit{velocity at infinity}, defined by $v^2:=\lim_{r\to\infty}({\rm d}r/{\rm d}\tau)^2$. Using equations \eqref{tandphi} and \eqref{mot_ERN}, we find the expression
\[v^2=\lim_{r\to\infty}\left(\frac{{\rm d}r}{{\rm d}\tau}\right)^2=\lim_{r\to\infty}\left(\frac{\dot{r}}{\dot{\tau}}\right)^2=\lim_{r\to\infty}\frac{\Delta(r)^2}{C^2}\left(\Delta(r)\left(\mathcal{H}-\frac{J^2}{r^2}\right)-C^2\right)=\frac{1}{E^2}-1\]
The equation \eqref{polmot_ERN} can be recast in Binet variable $u=1/r$ and yields
\begin{equation}\label{polES}
\left(\frac{{\rm d}u}{{\rm d}\phi}\right)^2+u^2=2Mu^3-\frac{2M}{L^2}u+\frac{1}{b^2}=2Mu\left(u^2-\frac{1}{L^2}\right)+\frac{1}{b^2},
\end{equation}
where the constant $b:=\sqrt{\frac{L^2}{1-E^2}}$ is the \textit{impact parameter} of $\gamma$, satisfying $L=\pm bvE$. Since
\[\lim_{r\to\infty}\frac{{\rm d}\phi}{{\rm d}\tau}=\lim_{r\to\infty}\frac{J\Delta(r)}{Cr^2}=0,\]
the orbit admits asymptotic lines, and we are first interested in the \textit{deflection angle at infinity} $\delta\phi$ between these two asymptotic directions, as a function of the orbit's \textit{perihelion} $r=r_{\rm min}$.

\begin{center}
\begin{figure}[h!]
\begin{tikzpicture}[scale=1,rotate=-90]
  \coordinate (z) at (0,0);
  \coordinate (zp) at (0,4);
  \coordinate (zm) at (0,-4);
  
  \draw[color=red,domain=-1.55*pi/4:1.55*pi/4,scale=1,samples=200] plot ({deg(\x)}:{2/sqrt(3^2*cos(\x r)^2-1)});
  
  \coordinate (pu) at ({7*1/3},{7*sqrt(1-1/3^2)});
  \coordinate (pup) at ({7.5*1/3},{7.5*sqrt(1-1/3^2)});
  \coordinate (pd) at ({7.5*1/3},{-7.5*sqrt(1-1/3^2)});
  \coordinate (mpu) at ({-2*1/3},{-2*sqrt(1-1/3^2)});
  \coordinate (mpd) at ({-2*1/3},{2*sqrt(1-1/3^2)});
  \coordinate (a) at ({2/sqrt(3^2-1)},0);
  \coordinate (ap) at ({2/sqrt(3^2-1)},5);
  \coordinate (am) at ({2/sqrt(3^2-1)},-5);
  \coordinate (f1) at ({2*3/sqrt(3^2-1)},0);
  
  \coordinate (Pb) at ({2*3/sqrt(3^2-1)+1.15*(5.5*1/3)},{0+1.15*(5.5*sqrt(1-1/3^2))});
  \coordinate (Pbp) at ({2*3/sqrt(3^2-1)+1.25*(5.5*1/3)},{0+1.25*(5.5*sqrt(1-1/3^2))});
  
  \fill[fill=black] (f1) circle (6pt);
  
  \draw[dotted] (mpu)--(pup) (mpd)--(pd);
  \draw (z)--(a) (ap)--(am);
  \draw[dotted] (f1)--(Pbp);
  \draw[ultra thick] (a)--(f1);
  \draw[<->,dashed,very thick] (Pb)--(pu);
  \draw (f1)--($(a)!1.5!(f1)$);
  \draw (z)--($(z)!-0.5!(f1)$);
  
  \draw ($(Pb)!0.5!(pu)$) node[right]{$b$};
  \draw ($(a)!1.1!(f1)$) node[below right]{$0$};
  \draw ($(f1)!0.5!(a)$) node[right]{$r_{\rm min}$};

  \draw ($(z)!-0.75!(f1)$) coordinate (A) (z) coordinate (B) (pd) coordinate (C) pic ["$\phi$",draw,->,angle radius=0.8cm]{angle};
  \draw (pu) coordinate (A) (z) coordinate (B) (mpd) coordinate (C) pic ["$\delta\phi$",draw,->,angle radius=1.5cm]{angle};
  
  \coordinate (de) at (0.71,2);
  \fill[fill=black] (de) circle (1.5pt);
  \fill[fill=black] (z) circle (1.5pt);
  \draw (pu) coordinate (A)  (de) coordinate (B)  (ap) coordinate (C) pic ["$\delta\phi/2$",draw,->,angle radius=2.5cm]{angle}; 
\end{tikzpicture}
\caption{Schematics of an orbit with angle deflection $\delta\phi=2\phi-\pi$.}\label{schematics}
\end{figure}
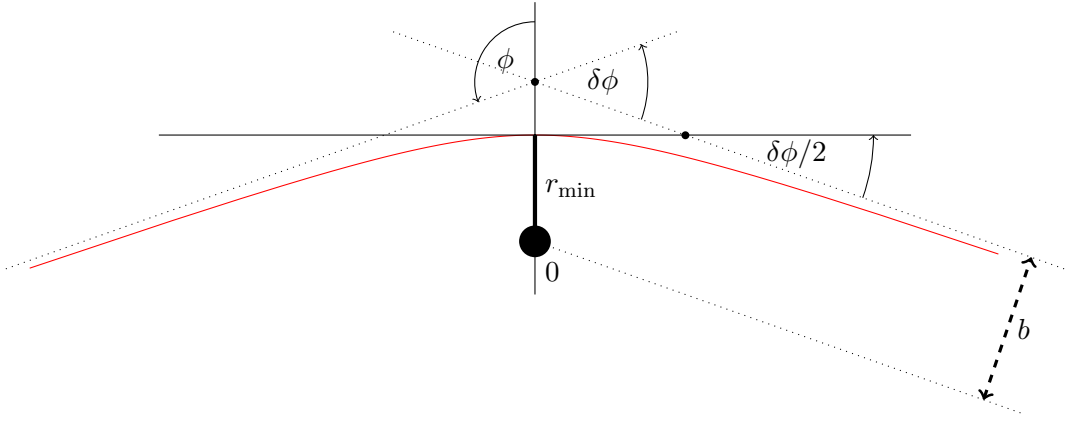
\end{center}

\subsection{Analytic expression of the deflection angle using Carlson's integrals}
As illustrated in the Figure \ref{schematics}, the deflection (at infinity) $\delta\phi$ is given by
\[\delta\phi=2|\phi(r=\infty)-\phi(r=r_{\rm min})|-\pi\]
and using \eqref{polmot_ERN} again leads to the expression
\[\delta\phi=2\int_{r_{\rm min}}^\infty{\rm d}\phi-\pi=2\int_{r_{\rm min}}^\infty\frac{{\rm d}\phi}{{\rm d}r}{\rm d}r-\pi=2\int_{r_{\rm min}}^\infty\frac{1}{\sqrt{\frac{1}{b^2}-\frac{1}{r^2}\left(1-\frac{2M}{r}+\frac{2Mr}{L^2}\right)}}\frac{{\rm d}r}{r^2}-\pi.\]
This expression can be simplified using the Weierstrass variable $\wp=Mu/2-1/12$. Indeed, using the constants $g_2,g_3$ given by equation \eqref{ges}, we have
\[\delta\phi=2\left|\int_0^{1/r_{\rm min}}\frac{{\rm d}u}{\sqrt{\frac{1}{b^2}-u^2+2Mu\left(u^2-\frac{1}{L^2}\right)}}\right|-\pi=2\int_{-\tfrac{1}{12}+\tfrac{M}{2r_{\rm min}}}^ {-\tfrac{1}{12}}\frac{{\rm d}p}{\sqrt{4p^3-g_2p-g_3}}-\pi.\]
This expression can be rewritten in terms of elliptic integrals, as in the pioneer work \cite{darwin_angle}. However, it is both easier and numerically more adequate to express it with the integrals \eqref{carlsonRF}, which we numerically approximate using Carlson's algorithm \cite{carlson}. Observe first that because $r=r_{\rm min}$ is a turning point of $\gamma$, we have
\[0=\left.\frac{{\rm d}\wp^2}{{\rm d}\phi^2}\right|_{r=r_{\rm min}}=4\wp(r=r_{\rm min})^3-g_2\wp(r=r_{\rm min})-g_3.\]
In other words, the point $\wp_{\rm max}:=M/(2r_{\rm min})-1/12$ is a root of the Weierstrass cubic and this leads to the factorization
\[4p^3-g_2p-g_3=(p-\wp_{\rm max})\left(4p^2+\left(\frac{2M}{r_{\rm min}}-\frac13\right)p+\left(\frac{1}{36}-g_2-\frac{M}{3r_{\rm min}}+\frac{M^2}{r_{\rm min}^2}\right)\right),\]
which allows to find the other two roots $\wp_{\pm}\in\C$ of the cubic. We then have
\begin{align*}
\int^{-\tfrac{1}{12}}_{\wp_{\rm max}}\frac{{\rm d}p}{\sqrt{4p^3-g_2p-g_3}}&=\int^0_{\tfrac{M}{2r_{\rm min}}}\frac{{\rm d}\zeta}{\sqrt{4\left(\zeta-\frac{1}{12}\right)^3-g_2\left(\zeta-\frac{1}{12}\right)-g_3}} \\
&=\frac12\int^0_{\tfrac{M}{2r_{\rm min}}}\frac{{\rm d}\zeta}{\sqrt{\left(\zeta-\wp_{\rm max}-\tfrac{1}{12}\right)\left(\zeta-\wp_--\tfrac{1}{12}\right)\left(\zeta-\wp_+-\tfrac{1}{12}\right)}} \\
&=\frac12\left[\int_{\tfrac{M}{2r_{\rm min}}}^\infty-\int_0^\infty\right]\frac{{\rm d}\zeta}{\sqrt{\left(\zeta-\wp_{\rm max}-\tfrac{1}{12}\right)\left(\zeta-\wp_--\tfrac{1}{12}\right)\left(\zeta-\wp_+-\tfrac{1}{12}\right)}} \\
&=R_F\left(e_{\rm max}+\tfrac{M}{2r_{\rm min}},e_-+\tfrac{M}{2r_{\rm min}},e_++\tfrac{M}{2r_{\rm min}}\right)-R_F(e_{\rm max},e_-,e_+),
\end{align*}
where $e_{{\rm max},\pm}:=-\wp_{{\rm max},\pm}-1/12$. Thus, we obtain the following expression for the deflection:
\begin{equation}\label{deltaphi_carlson}
\delta\phi=2\left(R_F\left(e_{\rm max}+\tfrac{M}{2r_{\rm min}},e_-+\tfrac{M}{2r_{\rm min}},e_++\tfrac{M}{2r_{\rm min}}\right)-R_F(e_{\rm max},e_-,e_+)\right)-\pi
\end{equation}
This is an efficient formula for numerical calculations (see below), but it is natural to ask for an approximation of $\delta\phi$ when $r_{\rm min}\to\infty$. This is the goal of the next subsection. Observe moreover that the two terms of \eqref{deltaphi_carlson} involving the integrals \eqref{carlsonRF} can be complex, but their difference is real and non-negative.

\subsection{Approximation of the deflection angle with perturbed solution}
The aim of this section is to obtain an expansion of $\delta\phi$ in powers of the perihelion $r_{\rm min}$, up to order 3. This choice of order will become transparent later, when we study the null and maximal deflection rings.

To do so, we could use for instance the well-known perturbed solution method \cite{straumann,gergely-darazs,briet-hobil,he_et_al}. However, we will avoid complicated calculations with a simple observation. First, using \eqref{mot_ERN}, the turning point condition at $r=r_{\rm min}$ gives the following relation
\[0=\left(1-\frac{2M}{r}\right)\left(1-\frac{L^2}{r^2}\right)-E^2,\]
or, in terms of $b$ and $v$,
\[r_{\rm min}^2\left(1-\frac{2M}{r_{\rm min}}\left(1+\frac{1}{v^2}\right)\right)=b^2\left(1-\frac{2M}{r_{\rm min}}\right),\]
that is,
\begin{equation}\label{b_rm}
b=r_{\rm min}\sqrt{\frac{1-\frac{2M}{r_{\rm min}}\left(1+\frac{1}{v^2}\right)}{1-\frac{2M}{r_{\rm min}}}}=r_{\rm min}\sqrt{1-\frac{2M}{v^2({r_{\rm min}}-2M)}}.
\end{equation}
Recall also that $L^2=b^2v^2(1+v^2)^{-1}$ and that the equation \eqref{polmot_ERN} in terms of $b$ and $L$ reads
\[\left(\frac{{\rm d}r}{{\rm d}\phi}\right)^2=r^4\left(\frac{1}{b^2}-\frac{1}{r^2}\left(1-\frac{2M}{r}+\frac{2Mr}{L^2}\right)\right).\]
On the other hand, in Lorentzian Schwarzschild geometry, the same equation
\[\left(\frac{{\rm d}r}{{\rm d}\phi}\right)^2=r^4\left(\frac{1}{b_{\rm Sch}^2}-\frac{1}{r^2}\left(1-\frac{2M}{r}-\frac{2Mr}{L_{\rm Sch}^2}\right)\right)\]
differs from the above one by just a sign in the term in $L^{-2}$, and we have in this case 
\[b_{\rm Sch}^2=r_{\rm min}\sqrt{1+\frac{2M}{v^2({r_{\rm min}}-2M)}},\]
as well as $L_{\rm Sch}^2=b^2v^2(1-v^2)^{-1}$. This means that the expression of the deflection angle in Euclidean Schwarzschild geometry with squared velocity $v^2$ is the same as the expression of the Lorentzian deflection angle with same perihelion and ``squared velocity'' $-v^2$. This is mathematically well-defined since the data only depend on the squared quantities $(b^2,v^2)$, and not on the pair $(b,v)$ itself, so no complex number is involved. Moreover, this interpretation is physically consistent with the relation $\tau=it$ between the Euclidean time $\tau$ and the Lorentzian time $t$ and the definition of the Euclidean velocity $v^2=\lim_{r\to\infty}({\rm d}r/{\rm d}\tau)^2$, while in the Lorentzian case, $v^2=\lim_{r\to\infty}({\rm d}r/{\rm d}t)^2$.

After \cite{accioly-ragusa} or \cite{li-zhou-li-he} for instance, up to order 3, in Lorentzian Schwarzschild geometry we have
\begin{equation}\label{def_sch}
\delta\phi=\frac{2M}{b}\left(1+\frac{1}{v^2}\right)+\frac{3\pi M^2}{4b^2}\left(1+\frac{4}{v^2}\right)+\frac{2M^3}{3b^3}\left(5+\frac{45}{v^2}+\frac{15}{v^4}-\frac{1}{v^6}\right)+\bigo\left(\frac{M^4}{b^4}\right).
\end{equation}
Therefore, changing the sign of the terms in $v^{-2}$ yields the following estimate, in Euclidean Schwarzschild geometry,
\begin{equation}\label{estimdeltaphi}\tag{$\delta_3^b$}
\delta\phi=\frac{2M}{b}\left(1-\frac{1}{v^2}\right)+\frac{3\pi M^2}{4b^2}\left(1-\frac{4}{v^2}\right)+\frac{2M^3}{3b^3}\left(5-\frac{45}{v^2}+\frac{15}{v^4}+\frac{1}{v^6}\right)+\bigo\left(\frac{M^4}{b^4}\right).
\end{equation}
Using now the relation \eqref{b_rm}, we arrive at the expression
\begin{align*}\label{estimdeltaphir}\tag{$\delta_3^r$}
\delta\phi=&\frac{2M}{r_{\rm min}}\left(1-\frac{1}{v^2}\right)+\frac{M^2}{r_{\rm min}^2}\left(\frac{3\pi}{4}\left(1-\frac{4}{v^2}\right)+\frac{2}{v^2}\left(1-\frac{1}{v^2}\right)\right) \\
&+\frac{M^3}{r_{\rm min}^3}\left(\frac{3\pi}{2v^2}\left(1-\frac{4}{v^2}\right)+\frac{10}{3}-\frac{26}{v^2}+\frac{9}{v^4}-\frac{7}{3v^6}\right)+\bigo\left(\frac{M^4}{r_{\rm min}^4}\right).
\end{align*}

Concerning the metric \eqref{ERN} with non-zero charge $Q\ne0$, the perturbed solution method gives, up to order 2 and after elementary calculations we omit,
\[\delta\phi=\frac{2M}{b}\left(1-\frac{1}{v^2}\right)+\frac{3\pi M^2}{4b^2}\left(1-\frac{4}{v^2}\right)-\frac{\pi Q^2}{4b^2}\left(1-\frac{2}{v^2}\right)+\bigo(b^{-3}),\]
in agreement with \cite{pang-jia}. Expressing the impact parameter $b$ in terms of the perihelion $r_{\rm min}$ as
\[b=r_{\rm min}\sqrt{1-\frac{2Mr_{\rm min}-Q^2}{v^2(r_{\rm min}^2-2Mr_{\rm min}+Q^2)}},\]
yields the expansion
\begin{equation}\label{def_with_charge}
\delta\phi=\frac{2M}{r_{\rm min}}\left(1-\frac{1}{v^2}\right)+\frac{M^2}{r_{\rm min}^2}\left(\frac{3\pi}{4}\left(1-\frac{4}{v^2}\right)+\frac{2}{v^2}\left(1-\frac{1}{v^2}\right)\right)-\frac{\pi Q^2}{4r_{\rm min}^2}\left(1-\frac{2}{v^2}\right)+\bigo(r_{\rm min}^{-3}).
\end{equation}
As a sanity check, observe that replacing $v^2$ by $-v^2$ in the previous expression and letting $v\to1$ leads to the light deflection formula of \cite[\S III.B]{briet-hobil}
\[\delta\phi=\frac{4M}{r_{\rm min}}+\frac{M^2}{r_{\rm min}^2}\left(\frac{15\pi}{4}-4\right)-\frac{3\pi Q^2}{4r_{\rm min}^2}+\bigo(r_{\rm min}^{-3}).\]

\subsection{Null and maximal deflection rings}
We start by observing that, at lowest order, the deflection angle for a usual (Lorentzian) massive Schwarzschild orbit with velocity at infinity $v$ is given by
\[\delta\phi_{\rm Lor}\approx\frac{2M}{r_{\rm min}}\left(1+\frac{1}{v^2}\right)>0,\]
while in the Euclidean background, the estimation \eqref{estimdeltaphir} reads
\[\delta\phi_{\rm Euc}\approx \frac{2M}{r_{\rm min}}\left(1-\frac{1}{v^2}\right).\]
This suggests that for the Euclidean Schwarzschild solution, the deflection $\delta\phi$ may vanish for some values of $r_{\rm min}$. This can also be noticed from the motion equation itself. Indeed, recall from \cite[\S 144]{mcmahon-snyder} that a polar curve, parametrized in Binet variable $u$ is \textit{concave} with respect to its pole (hence, has a positive deflection) if and only if $u+\frac{{\rm d}^2u}{{\rm d}\phi^2}>0$. Now, this quantity in Schwarzschild geometry is given by
\[\ddot{u}+u=M(3u^2-\epsilon L^{-2}),\]
where $\epsilon$ is the signature of the metric. Hence, this is always positive in the Lorentzian background $\epsilon=-1$, while it may change of sign in the Euclidean world $\epsilon=1$.

To numerically appreciate this phenomenon, in the Figure \ref{repulsive} we depict pencils of orbits with different velocities. Their deflection at infinity is computed using the formula \eqref{deltaphi_carlson} and Carlson's algorithm. Observe that the deflection stays negative when $v<1$. The more extreme case where $v=5$ is displayed in the Figure \ref{butterfly}, along with the graph representing the deflection angle as a function of the perihelion. Observe the presence of a null deflection at $r_{\rm min}\approx 2.16$ and a maximal deflection at $r_{\rm min}\approx2.5$. The Figure \ref{many_velocities} depicts the deflection as a function of the perihelion, for several velocities at infinity.
\vspace{-10mm}

\begin{center}
\begin{figure}
\begin{subfigure}[c]{0.495\columnwidth}
\centering
\includegraphics[width=1\linewidth]{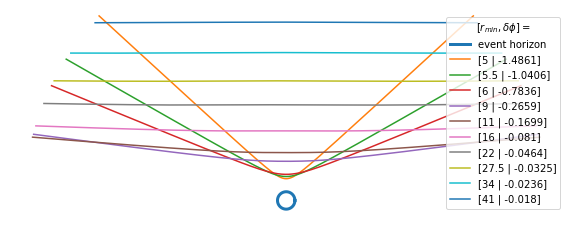}
\caption{$v=0.9$}
\end{subfigure}
\hfill
\begin{subfigure}[c]{0.495\columnwidth}
\centering
\includegraphics[width=1\linewidth]{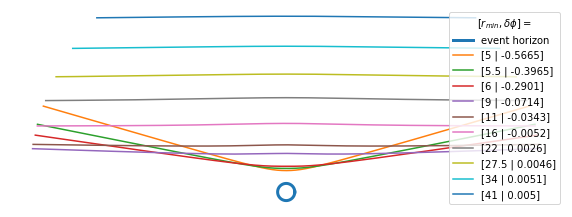}
\caption{$v=1.1$}
\end{subfigure}
\\
\begin{subfigure}[c]{0.495\columnwidth}
\centering
\includegraphics[width=1\linewidth]{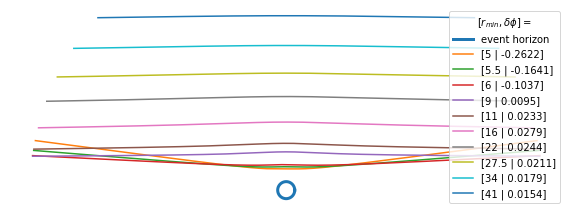}
\caption{$v=1.25$}
\end{subfigure}
\hfill
\begin{subfigure}[c]{0.495\columnwidth}
\centering
\includegraphics[width=1\linewidth]{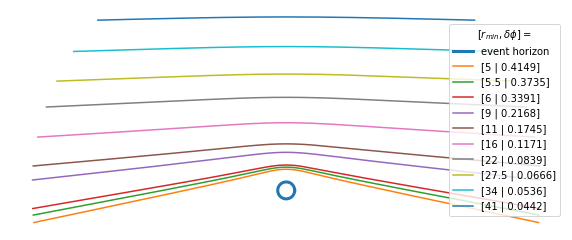}
\caption{$v=3$}
\end{subfigure}
\caption{Orbits with different velocities and their perihelion (in units of $M$) and deflection at infinity.}\label{repulsive}
\end{figure}
\end{center}
\begin{center}
\begin{figure}
\begin{subfigure}[c]{0.495\columnwidth}
\centering
\includegraphics[width=1\linewidth]{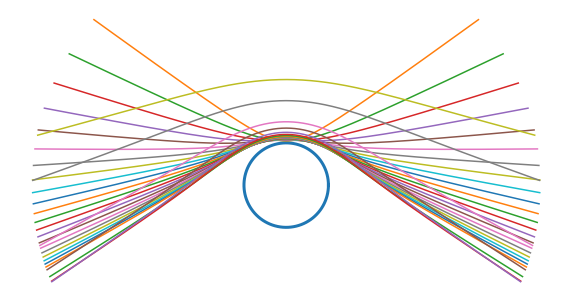}
\caption{Orbits with $2.1<r_{\rm min}<5$.}
\end{subfigure}
\hfill
\begin{subfigure}[c]{0.495\columnwidth}
\centering
\includegraphics[width=1\linewidth]{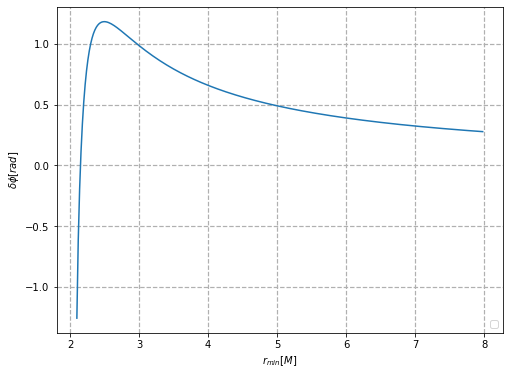}
\caption{The deflection $\delta\phi$ as a function of $r_{\rm min}$.}
\end{subfigure}
\caption{Orbits with $v=5$ and corresponding deflection graph.}\label{butterfly}
\end{figure}
\end{center}
\begin{center}
\begin{figure}
\includegraphics[scale=0.45]{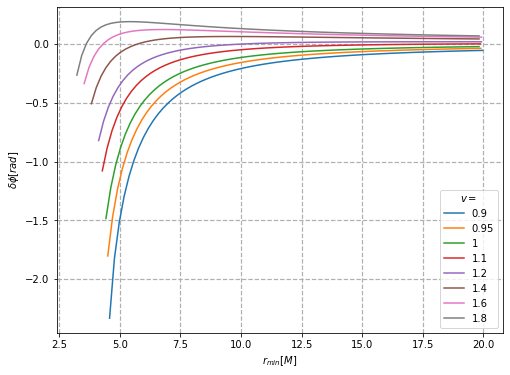}
\caption{Representations of $\delta\phi=\delta\phi(r_{\rm min})$ for several values of $v$.}\label{many_velocities}
\end{figure}
\end{center}

Observe that in the Lorentzian (resp. Euclidean) Schwarzschild spacetime, the impact parameter $b$ of a massive particle (resp. of any particle) satisfies $bv=L/E$. Moreover, a photon in the usual Schwarzschild metric has $b=L/E$ and this motivates the following terminology: an orbit in a Euclidean Schwarzschild spacetime with velocity at infinity $v$ will be called of \textit{sub-photon type} (resp. of \textit{sup-photon type}) if $v\lesssim1$ (resp. $v\gtrsim1$).

We have numerically observed that sub-photon orbits have a negative deflection (i.e. are repelled by the central mass) and that the absolute value of $\delta\phi$ behaves like in the usual Schwarzschild spacetime. This can also be noticed using the lowest order approximations from \eqref{estimdeltaphir} and \eqref{def_sch} recalled at the beginning of the subsection. 

For orbits with fixed velocity $v>1$, one could use numerical methods on the expression \eqref{deltaphi_carlson} to find an approximation of the critical value $\rho_0$ (resp. $\rho_{\rm max}$) such that the deflection angle of the orbit with perihelion $r_{\rm min}=\rho_0$ (resp. $r_{\rm min}=\rho_{\rm max}$) vanishes (resp. is maximal). However, we can also use the estimate \eqref{estimdeltaphir}. Indeed, as it is a third order approximation of the deflection, to solve $\delta\phi=0$ at this order amounts to solve a quadratic equation. Similarly, we can differentiate \eqref{estimdeltaphir} with respect to $r_{\rm min}$ to obtain an approximation of $\rho_{\rm max}$ using the quadratic formula again. This is precisely the reason why we chose this order at the first place. Though the resulting expressions for $\rho_0$ and $\rho_{\rm max}$ are not very enlightening, it is worth noticing that they both diverge when $v\to1$. This suggests to expand these expressions in powers of $v-1$, yielding estimations of $\rho_0$ and $\rho_{\rm max}$ for orbits of sup-photon type. We find
\begin{equation}\label{Rm0}
\frac{\rho_0}{M}=\frac{9\pi}{16}(v-1)^{-1}+1-\frac{21\pi}{32}+\frac{64}{9\pi}+\bigo(v-1)\approx\frac{1.767}{v-1} +1.202+\bigo(v-1)
\end{equation}
and
\begin{equation}\label{Rm1}
\frac{\rho_{\rm max}}{M}=\frac{9\pi}{8}(v-1)^{-1}+1-\frac{21\pi}{16}+\frac{32}{3\pi}+\bigo(v-1)\approx\frac{3.534}{v-1}+0.272+\bigo(v-1).
\end{equation}
Observe that, at lowest order, we have the remarkable relation $\rho_{\rm max}=2\rho_0$. Moreover, at this order, the maximal value of the deflection is given by
\begin{equation}\label{deltamax}\tag{$\delta_{\rm max}$}
\delta\phi_{\rm max}\approx\frac{16(v-1)}{9\pi}\left(1-\frac{1}{v^2}\right).
\end{equation}
However, this estimate badly fails when $v\gg1$. We illustrate our approximations in the Figure \ref{approxes}.

\vspace{-5mm}

\begin{center}
\begin{figure}
\begin{subfigure}[c]{0.495\columnwidth}
\centering
\includegraphics[width=1\linewidth]{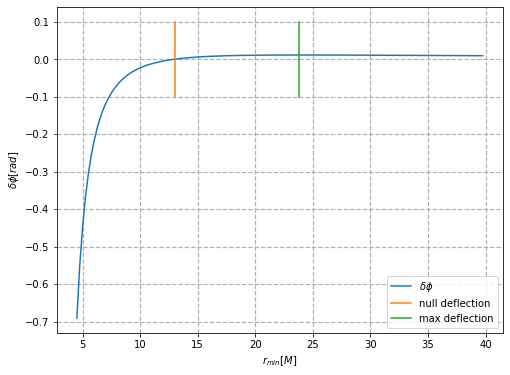}
\caption{$v=1.15$}
\end{subfigure}
\hfill
\begin{subfigure}[c]{0.495\columnwidth}
\centering
\includegraphics[width=1\linewidth]{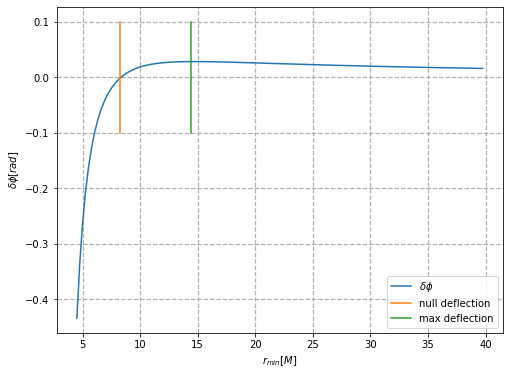}
\caption{$v=1.25$}
\end{subfigure}
\\
\begin{subfigure}[c]{0.495\columnwidth}
\centering
\includegraphics[width=1\linewidth]{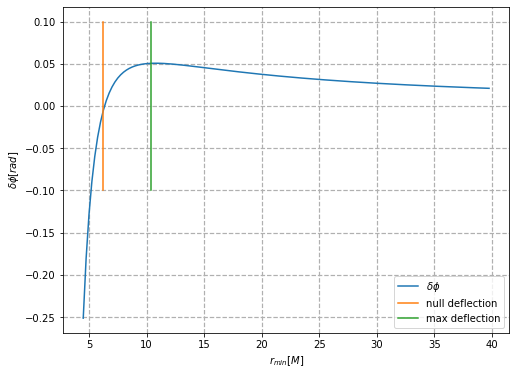}
\caption{$v=1.35$}
\end{subfigure}
\hfill
\begin{subfigure}[c]{0.495\columnwidth}
\centering
\includegraphics[width=1\linewidth]{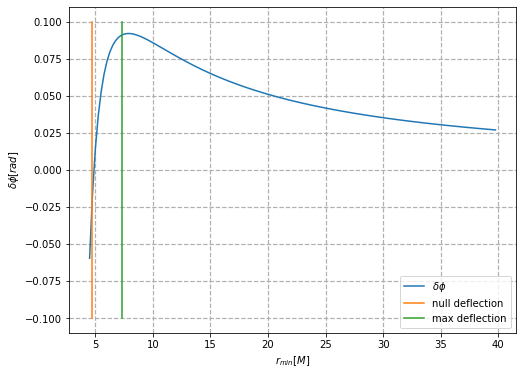}
\caption{$v=1.5$}
\end{subfigure}
\caption{Accuracy of the approximations \eqref{Rm0} and \eqref{Rm1} for several velocities.}\label{approxes}
\end{figure}
\end{center}

\subsection{Deflection of orbits passing through a given point}

Consider now an orbit starting from a given point $(\phi,r)=(0,r_0)$, for some fixed $r_0>2M$, whose velocity vector at this point makes an angle $0<\alpha\le\tfrac{\pi}{2}$ with respect to the radial direction. Thinking of the point $(\phi,r)=(0,r_0)$ as an observer receiving particles coming from infinity, we are interested in the total deflection angle $\delta_\alpha\phi$ between the tangent line to the orbit at $(0,r_0)$ and the asymptotic direction. The situation may be visualized using the Figure \ref{schematicsa}.
\vspace{-5mm}

\begin{center}
\begin{figure}
\begin{tikzpicture}[rotate=-80.2]
  \coordinate (z) at (0,0);
  \coordinate (zp) at (0,4);
  \coordinate (zm) at (0,-4);
  
  \draw[color=red,domain=-1.17*pi/4:1.17*pi/4,scale=1,samples=200] plot ({deg(\x)}:{2/sqrt(1.7^2*cos(\x r)^2-1)});
  
  \coordinate (pu) at ({8*1/1.7},{8*sqrt(1-1/1.7^2)});
  \coordinate (pup) at ({8.5*1/1.7},{8.5*sqrt(1-1/1.7^2)});
  \coordinate (pd) at ({8*1/1.7},{-8*sqrt(1-1/1.7^2)});
  \coordinate (mpu) at ({-3.5*1/1.7},{-3.5*sqrt(1-1/1.7^2)});
  \coordinate (mpd) at ({-3.5*1/1.7},{3.5*sqrt(1-1/1.7^2)});
  \coordinate (a) at ({2/sqrt(1.7^2-1)},0);
  \coordinate (ap) at ({2/sqrt(1.7^2-1)},4.5);
  \coordinate (am) at ({2/sqrt(1.7^2-1)},-4.5);
  \coordinate (f1) at ({2*1.7/sqrt(1.7^2-1)},0);
  
  \coordinate (rph) at ({2*cosh(1.4)/sqrt(1.7^2-1)},{2*sinh(1.4)});
  \coordinate (rphp) at ({2*cosh(1.4)/sqrt(1.7^2-1)-1.05*2*sinh(1.4)/sqrt(1.7^2-1)},{2*sinh(1.4)-1.05*2*cosh(1.4)});
  \coordinate (rphm) at ({2*cosh(1.4)/sqrt(1.7^2-1)+0.5*2*sinh(1.4)/sqrt(1.7^2-1)},{2*sinh(1.4)+0.5*2*cosh(1.4)});
  
  \coordinate (R) at ({2*1.7/sqrt(1.7^2-1)+1.5*(2*cosh(1.4)/sqrt(1.7^2-1)-2*1.7/sqrt(1.7^2-1))},{0+1.5*(2*sinh(1.4)-0)});
  \coordinate (L) at ({2*1.7/sqrt(1.7^2-1)-1.5*(2*cosh(1.4)/sqrt(1.7^2-1)-2*1.7/sqrt(1.7^2-1))},{0-1.5*(2*sinh(1.4)-0)});
  \coordinate (Pb) at ({2*1.7/sqrt(1.7^2-1)+1.2*(5.5*1/1.7)},{0+1.2*(5.5*sqrt(1-1/1.7^2))});
  \coordinate (Pbp) at ({2*1.7/sqrt(1.7^2-1)+1.3*(5.5*1/1.7)},{0+1.3*(5.5*sqrt(1-1/1.7^2))});
  
  \fill[fill=black] (rph) circle (3pt);
  \fill[fill=black] (f1) circle (6pt);
  
  \draw[dotted] (mpu)--(pup) (mpd)--(pd);
  \draw (z)--(a);
  \draw[very thick] (rph)--(rphp) (rph)--(rphm);
  \draw (rph)--(R) (f1)--(L);
  \draw[dotted] (f1)--(Pbp);
  \draw[ultra thick] (a)--(f1) (f1)--(rph);
  \draw[<->,dashed,very thick] (Pb)--(pu);
  \draw (f1)--($(a)!2.5!(f1)$);
  \draw (z)--($(z)!-0.75!(f1)$);
  
  \draw ($(Pb)!0.5!(pu)$) node[right]{$b$};
  \draw ($(f1)!0.5!(rph)$) node[below]{$r_0$};
  \draw ($(a)!1.1!(f1)$) node[below right]{$0$};
  \draw ($(f1)!0.5!(a)$) node[right]{$r_{\rm min}$};
  
  \draw (rphm) coordinate (A)  (rph) coordinate (B)  (R) coordinate (C) pic ["$\alpha$",draw,->,angle radius=1.5cm]{angle};
  \draw (L) coordinate (A)  (f1) coordinate (B)  ($(a)!2.5!(f1)$) coordinate (C) pic ["$\Phi$",draw,->,angle radius=1cm]{angle};
  \draw ($(z)!-0.75!(f1)$) coordinate (A) (z) coordinate (B) (pd) coordinate (C) pic ["$\phi$",draw,->,angle radius=0.8cm]{angle};
  \draw (mpu) coordinate (A) (z) coordinate (B) (pd) coordinate (C) pic ["$\delta\phi=2\phi-\pi$",draw,->,angle radius=3.2cm]{angle};
  
  \coordinate (De) at (0.36,-0.485);
  \coordinate (de) at (1.46,2);
  \fill[fill=black] (De) circle (1.5pt);
  \fill[fill=black] (z) circle (1.5pt);
  \draw (rph) coordinate (A) (De) coordinate (B) (mpd) coordinate (C) pic ["$\delta_\alpha\phi$",draw,->,angle radius=1.5cm,very thick]{angle};  
\end{tikzpicture}
\caption{Schematics of an orbit passing through the point $(\phi,r)=(0,r_0)$, with total deflection $\delta_\alpha\phi=\Phi+\alpha+\tfrac12(\delta\phi-\pi)=\alpha-\pi+\phi+\Phi$.}\label{schematicsa}
\end{figure}
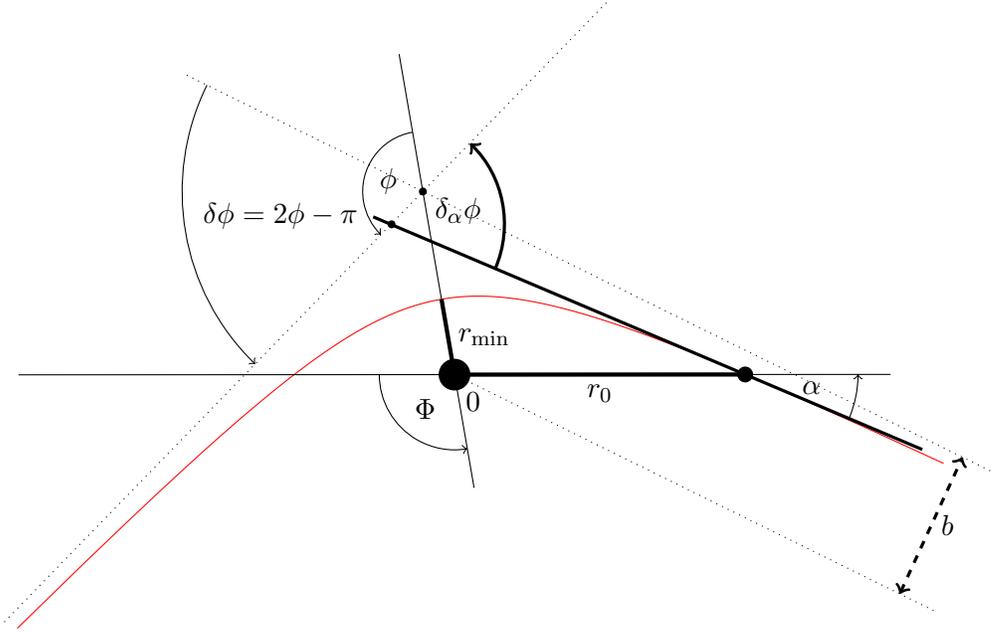
\end{center}

We first proceed as in the first subsection, to obtain a numerically efficient analytic formula for $\delta_\alpha\phi$, using Carlson's integrals. Recall the Weierstrass variable $\wp=M/(2r)-1/12$, satisfying the Weierstrass equation with constants $g_{2,3}=g_{2,3}(v,r_{\rm min})$ given by \eqref{ges}. Recall also that $\wp_{\rm max}=M/(2r_{\rm min })-1/12$ is the maximal value of $\wp$ along the geodesic. We have
\[\phi=\int_{\infty}^{r_{\rm min}}\frac{{\rm d}r}{r^2\sqrt{\frac{1}{b^2}-\frac{1}{r^2}\left(1-\frac{2M}{r}+\frac{2Mr}{L^2}\right)}}=\int_{\wp_{\rm max}}^{-\frac{1}{12}}\frac{{\rm d}p}{\sqrt{4p^3-g_2p-g_3}},\]
and
\[\Phi=\int_{r_0}^{r_{\rm min}}\frac{{\rm d}r}{r^2\sqrt{\frac{1}{b^2}-\frac{1}{r^2}\left(1-\frac{2M}{r}+\frac{2Mr}{L^2}\right)}}=\int_{\wp_{\rm max}}^{\wp_0}\frac{{\rm d}p}{\sqrt{4p^3-g_2p-g_3}},\]
where $\wp_0:=M/(2r_0)-1/12$, so that
\[\delta_\alpha\phi=\alpha-\pi+\phi+\Phi=\alpha-\pi+\left[\int_{\wp_{\max}}^{-\frac{1}{12}}+\int_{\wp_{\rm max}}^{\wp_0}\right]\frac{{\rm d}p}{\sqrt{4p^3-g_2p-g_3}}.\]
We now introduce the two remaining roots $\wp_{\pm}\ne\wp_{\rm max}$ of the cubic $4p^3-g_2p-g_3$, as well as the constants $e_{{\rm max},\pm}:=-\wp_{{\rm max},\pm}-1/12$ as above and arrive at the expression
\begin{align*}\label{Deltaphi_carlson}\tag{$13$}
\delta_\alpha\phi=&R_F\left(e_{\rm max}+\tfrac{M}{2r_{\rm min}},e_-+\tfrac{M}{2r_{\rm min}},e_++\tfrac{M}{2r_{\rm min}}\right)+R_F\left(e_{\rm max}+\tfrac{M}{2r_0},e_-+\tfrac{M}{2r_0},e_++\tfrac{M}{2r_0}\right) \\
&+\alpha-\pi-2R_F(e_{\rm max},e_-,e_+).
\end{align*}
\addtocounter{equation}{1}

It may be observed that, at fixed $\alpha\in]0;\pi/2]$, when $r_0\gg0$, we have $\Phi\approx\tfrac{\pi}{2}-\alpha$, so $r_{\rm min}\approx r_0\cos(\Phi)\approx r_0\sin(\alpha)$ and $\delta_\alpha\phi\approx\tfrac12\delta\phi(r_{\rm min}=r_0\sin\alpha)$. On the other hand, at fixed $r_0$, when $\alpha=\tfrac{\pi}{2}$, we have $r_{\rm min}=r_0$ and $\Phi=0$. We thus obtain the boundary conditions
\[\delta_\alpha\phi\stackrel[r_0\to\infty]{}\approx \frac12\delta\phi(r_{\rm min}=r_0\sin\alpha)\text{ and }\delta_{{\pi}/{2}}\phi=\frac12\delta\phi(r_{\rm min}=r_0).\]
In particular, at lowest order,
\[\delta_\alpha\phi\stackrel[r_0\to\infty]{}\approx\frac{M}{r_0\sin\alpha}\left(1-\frac{1}{v^2}\right).\]
We illustrate the situation in the Figure \ref{brush}.
\vspace{-5mm}

\begin{center}
\begin{figure}
\begin{subfigure}[c]{0.495\columnwidth}
\centering
\includegraphics[width=1\linewidth]{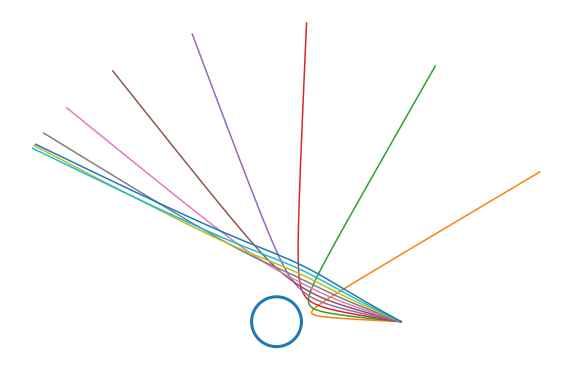}
\caption{Orbits with $0.05<\alpha<0.5$.}\label{brusha}
\end{subfigure}
\hfill
\begin{subfigure}[c]{0.495\columnwidth}
\centering
\includegraphics[width=1\linewidth]{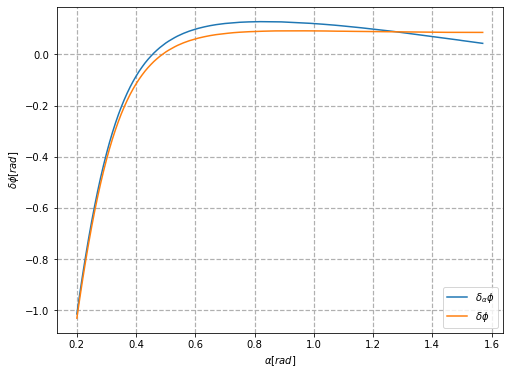}
\caption{Total deflection $\delta_\alpha\phi$ and deflection at infinity $\delta\phi$, as functions of $\alpha$.}\label{brushb}
\end{subfigure}
\caption{Orbits with $r_0=10M$ and $v=1.5$, and the corresponding graph.}\label{brush}
\end{figure}
\end{center}

For fixed values of $r_0$ and $v$, we now aim to find the critical angle $\alpha_0$ (resp. $\alpha_{\rm max}$) at which the total deflection $\delta_\alpha\phi$ vanishes (resp. is maximal). First recall the following relation from \cite{beloborodov02} (which is just the Pythagorian theorem in curved space)
\[\tan^2\alpha=\left.\frac{p_\phi\dot{\gamma}^\phi}{p_r\dot{\gamma}^r}\right|_{r=r_0}=\left.r^2\left(1-\frac{2M}{r}\right)\left(\frac{{\rm d}\phi}{{\rm d}r}\right)^2\right|_{r=r_0}=\frac{1-\frac{2M}{r_0}}{\frac{r_0^2}{b^2}-1+\frac{2M}{r_0}-\frac{2Mr_0}{L^2}}\]
yielding
\begin{equation}\label{tanalpha}
\alpha=\arctan\sqrt{\frac{1-\frac{2M}{r_0}}{\frac{r_0^2}{b^2}-1+\frac{2M}{r_0}-\frac{2Mr_0}{L^2}}}.
\end{equation}
The value of the impact parameter $b$ and the angular momentum $L$ are obtained using the relations \eqref{b_rm} and $L=bv(1+v^2)^{-1/2}$. Therefore, we obtain the value of $\alpha$ as a function of the triple $(v,r_0,r_{\rm min})$. With the notation of the previous subsection, we are thus interested in the values of $\alpha$ at $(v,r_0,\rho_0)$ and $(v,r_0,\rho_{\rm max})$.

For sup-photon type orbits with high perihelion, we may inject the approximations \eqref{Rm0} and \eqref{Rm1} into \eqref{tanalpha}, expand in powers of $r_0$ up to order 2, and then expand in powers of $v-1$ up to order 0 and obtain the following estimates
\begin{equation}\label{alpha0}
\alpha_0=\frac{9\pi M}{16r_0}\left(1+\frac{M}{r_0}\right)(v-1)^{-1}+\frac{M}{r_0}\left(\frac{64}{9\pi}-\frac{21\pi}{32}\right)-\frac{M^2}{r_0^2}\left(\frac{57\pi}{32}-\frac{64}{9\pi}\right)+\bigo(v-1,r_0^{-3})
\end{equation}
and
\begin{equation}\label{alphamax}
\alpha_{\rm max}=\frac{9\pi M}{8r_0}\left(1+\frac{M}{r_0}\right)(v-1)^{-1}-\frac{M}{r_0}\left(\frac{21\pi}{16}-\frac{32}{3\pi}\right)-\frac{M^2}{r_0^2}\left(\frac{57\pi}{16}-\frac{32}{3\pi}\right)+\bigo(v-1,r_0^{-3}).
\end{equation}

We illustrate our approximations in the Figure \ref{approxesa}.
\vspace{-.5cm}

\begin{center}
\begin{figure}
\begin{subfigure}[c]{0.495\columnwidth}
\centering
\includegraphics[width=1\linewidth]{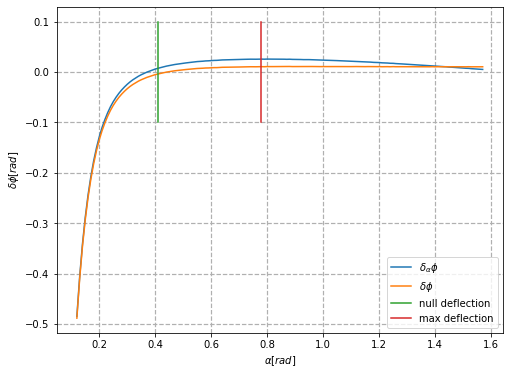}
\caption{$v=1.15$}
\end{subfigure}
\hfill
\begin{subfigure}[c]{0.495\columnwidth}
\centering
\includegraphics[width=1\linewidth]{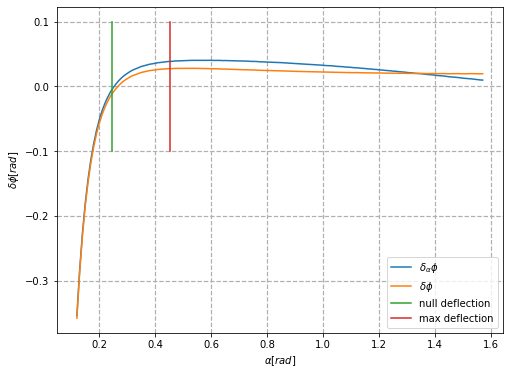}
\caption{$v=1.25$}
\end{subfigure}
\\
\begin{subfigure}[c]{0.495\columnwidth}
\centering
\includegraphics[width=1\linewidth]{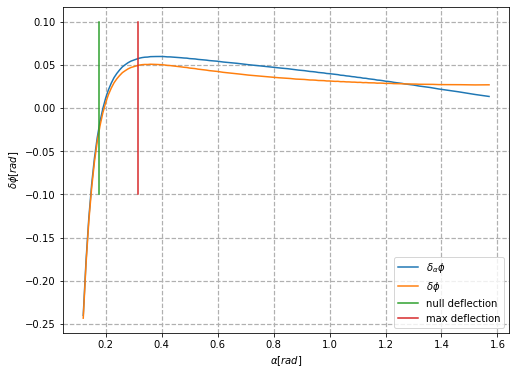}
\caption{$v=1.35$}
\end{subfigure}
\hfill
\begin{subfigure}[c]{0.495\columnwidth}
\centering
\includegraphics[width=1\linewidth]{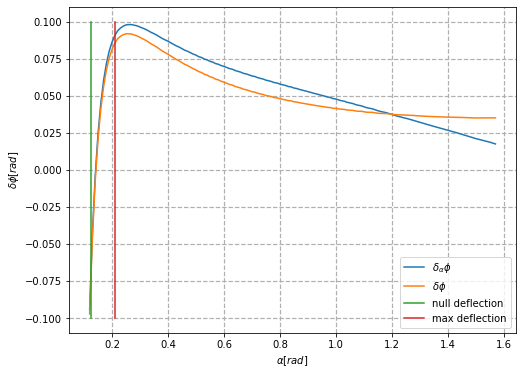}
\caption{$v=1.5$}
\end{subfigure}
\caption{Accuracy of the approximations \eqref{alpha0} and \eqref{alphamax} for several velocities and $r_0=30M$.}\label{approxesa}
\end{figure}
\end{center}

Finally, concerning the observable sizes $R_0$ and $R_{\rm max}$ of the null and maximal deflection rings, we have $R_0=r_0\tan\alpha_0$ and $R_{\rm max}=r_0\tan\alpha_{\rm max}$. For an orbit of sup-photon type with $r_0\gg0$, we may use the relation \eqref{tanalpha} and the estimates \eqref{Rm0} and \eqref{Rm1} to obtain
\begin{equation}\label{R0}
R_0=\frac{9\pi M}{16}\left(1+\frac{M}{r_0}\right)(v-1)^{-1}+{M}\left(\frac{64}{9\pi}-\frac{21\pi}{32}\right)-\frac{M^2}{r_0}\left(\frac{57\pi}{32}-\frac{64}{9\pi}\right)+\bigo(v-1,r_0^{-2})
\end{equation}
and
\begin{equation}\label{Rmax}
R_{\rm max}=\frac{9\pi M}{8}\left(1+\frac{M}{r_0}\right)(v-1)^{-1}-{M}\left(\frac{21\pi}{16}-\frac{32}{3\pi}\right)-\frac{M^2}{r_0}\left(\frac{57\pi}{16}-\frac{32}{3\pi}\right)+\bigo(v-1,r_0^{-2}).
\end{equation}
Observe that this is consistent with the approximations \eqref{alpha0} and \eqref{alphamax} since, at fixed velocity, we have $\lim_{r_0\to\infty}\alpha_0=0$ and thus $R_0=r_0\tan\alpha_0\sim r_0\alpha_0$ and similarly, $R_{\rm max}\sim r_0\alpha_{\rm max}$.

We may proceed similarly for the charged case where $Q\ne0$, but the calculations are a bit tougher. At the lowest order, we have
\begin{equation}\label{charged_rings1}
\rho_{\rm max}\stackrel[v\to1]{}\sim\frac{\pi M}{8}\left(9-\frac{Q^2}{M^2}\right)(v-1)^{-1},
\end{equation}
as well as
\begin{equation}\label{charged_rings2}
\alpha_{\rm max}\stackrel[\substack{r_0\to\infty \\ v\to1}]{}\sim\frac{\pi M}{8r_0}\left(9-\frac{Q^2}{M^2}\right)\left(1+\frac{M}{r_0}\right)(v-1)^{-1}
\end{equation}
and
\begin{equation}\label{charged_rings3}
R_{\rm max}\stackrel[\substack{r_0\to\infty \\ v\to1}]{}\sim\frac{\pi M}{8}\left(9-\frac{Q^2}{M^2}\right)\left(1+\frac{M}{r_0}\right)(v-1)^{-1}.
\end{equation}
Furthermore, just as in the Schwarzschild case, at lowest order, we have $\rho_{\rm max}=2\rho_0$ and similarly for $\alpha_0$ and $R_0$. Note that for these approximations to make sense, the velocity at infinity $v=\sqrt{E^{-2}-1}$ should be real and in view of \S \ref{energyy}, this is ensured only when $Q^2<M^2$.

\section{Geodesic motion in Bertotti--Robinson spacetimes}\label{dynamics_BR}
In this section we investigate the dynamics in Euclidean and Lorentzian Bertotti--Robinson spacetimes and provide in particular analytic solutions for the geodesic equation. For $m,q\in\R$, consider the line element in Binet radial variable
\begin{equation}\label{BR}\tag{BRb}
{\rm d}s^2=Q^2\left[\epsilon \Delta(u){\rm d}\tau^2+\Delta(u)^{-1}{\rm d}u^2+{\rm d}\Omega^2\right],
\end{equation}
where $\Delta(u):=u^2-2mu+q^2$. This is an electro-vacuum solution of Einstein's equation for $A_\mu=\sqrt{-\epsilon}Qu{\rm d}\tau$.

Let $\gamma=(\tau,u,\theta,\phi)$ be a geodesic with respect to the metric \eqref{BR}, parametrized by an affine parameter $\lambda$. By spherical symmetry, we may assume that $\theta\equiv\pi/2$, in which case the Lagrangian reads
\[\mathcal{L}=\frac12 g_{\mu\mu}(\dot{\gamma}^\mu)^2=\frac{Q^2}{2}\left[\epsilon \Delta(u)\dot{\tau}^2+\Delta(u)^{-1}\dot{u}^2+\dot{\phi}^2\right].\]
Thus, the temporal and angular Euler--Lagrange equations immediately yield constants $J,C\in\R$ such that $\dot{\tau}=C\Delta(u)^{-1}$ and $\dot{\phi}=J$ and so the quantity
\[\mathcal{H}:=2\mathcal{L}=Q^2\left[\Delta(u)^{-1}(\epsilon C^2+\dot{u}^2)+J^2\right]\]
is conserved along $\gamma$. Thus, the proper length $s$ satisfies ${\rm d}s^2=\mathcal{H}{\rm d}\lambda^2$ and we get
\begin{equation}\label{mot_BR}
\left(\frac{{\rm d}u}{{\rm d}s}\right)^2=\frac{\dot{u}^2}{\mathcal{H}}=\Delta(u)\left(\frac{1}{Q^2}-L^2\right)-\epsilon E^2,
\end{equation}
where we denote $L:=J/\sqrt{\mathcal{H}}$ and $E:=C/\sqrt{\mathcal{H}}$. From now on, differentiation with respect to $s$ will be denoted by a dot. Since the right-hand side of \eqref{mot_BR} is quadratic in $u$, we may solve it explicitly.

Fix now an exterior equatorial geodesic $\gamma=(\tau,u,\phi)$ in the spacetime \eqref{BR}, with proper angular momentum $L=\dot{\phi}(0)$ and let $u_0:=u(0)$ and $\dot{u}_0:=\dot{u}(0)$. The Binet component $u$ of $\gamma$, as a function of the proper length $s$, is given by
\begin{equation}\label{analytic_BR}
u(s)=m+(u(0)-m)\cos(s\sqrt{\ell})+\dot{u}(0)\frac{\sin(s\sqrt{\ell})}{\sqrt{\ell}},\text{ where }\ell:=L^2-1/Q^2.
\end{equation}
Moreover, the only circular orbit has $u\equiv m$.

Before establishing \eqref{analytic_BR}, we observe the following consequences:
\begin{itemize}
\item The calculations of the previous proof allow to re-write the motion equation \eqref{mot_BR} as
\[\left(\frac{{\rm d}u}{{\rm d}s}\right)^2=(1/Q^2-\dot{\phi}_0^2)(u-u_0)(u+u_0-2m)+\dot{u}_0^2.\]
In particular, the proper spatial dynamics does not depend on the parameter $q$, nor on the signature of the metric \eqref{BR}. Moreover, the translated variable $v:=u-m$ satisfies the equation
\[\dot{v}^2+\ell v^2=\ell(u_0-m)^2+\dot{v}_0^2.\]
Therefore, in the non-degenerate case $\ell\ne0$, the phase portrait in Binet variable is either an ellipse if $\ell>0$, and a hyperbola otherwise.
\item A non-circular geodesic $\gamma=(\tau,u,\theta,\phi)$ with angular momentum $L=\dot{\phi}(0)$ is bounded if and only if $Q^2L^2>{1}$, in which case it has periodic radial component with proper period 
\[\omega_\gamma=\frac{2\pi Q}{\sqrt{Q^2L^2-1}}=\frac{2\pi}{L}+\bigo\left(\frac{1}{Q^2L^3}\right).\]
When $Q^2L^2=1$, the geodesic is affine in proper length and when $Q^2L^2<1$, we have
\[u(s)\stackbin[s\to\infty]{}=\bigo\left(e^{s\sqrt{1-Q^2L^2}}\right),\]
so the Binet variable $u$ blows-up exponentially in proper length.
\item There are exterior circular orbit only when $q^2>m^2$, in which case the only such orbit has $r=1/m$. In particular, there are no circular orbits in the original Bertotti--Robinson space \cite{bertotti59,robinson59}.
\item Any (exterior) Bertotti--Robinson spacetime is geodesically complete in Binet variable.
\end{itemize}

Back to the derivation of \eqref{analytic_BR}, observe first that, up to dilating the affine parameter, we may assume that $\left.\left(\frac{{\rm d}\tau}{{\rm d}\lambda}\right)\right|_{\lambda=0}=1$. Then, we have
\[\mathcal{H}=Q^2\left[\frac{\epsilon\left(\Delta(u_0)\left.\left(\frac{{\rm d}\tau}{{\rm d}\lambda}\right)\right|_{\lambda=0}\right)^2+\left.\left(\tfrac{{\rm d}u}{{\rm d}\lambda}\right)\right|_{\lambda=0}^2}{\Delta(u_0)}+J^2\right]=Q^2\left[\epsilon \Delta(u_0)+\mathcal{H}\frac{\dot{u}_0^2+\Delta(u_0)L^2}{\Delta(u_0)}\right],\]
implying
\[\mathcal{H}=\frac{\epsilon Q^2\Delta(u_0)^2}{\Delta(u_0)(1-Q^2L^2)-Q^2\dot{u}_0^2}~\Longrightarrow~E^2=\frac{C^2}{{\mathcal{H}}}=\frac{\Delta(u_0)(1-Q^2L^2)-Q^2\dot{u}_0^2}{\epsilon Q^2}=-\frac{\ell \Delta(u_0)+\dot{u}_0^2}{\epsilon}\]
and thus the equation \eqref{mot_BR} becomes
\begin{equation}\label{new_mot_BR}
\dot{u}^2=\ell(\Delta(u_0)-\Delta(u))+\dot{u}_0^2=\ell(u_0-u)(u_0+u-2m)+\dot{u}_0^2.
\end{equation}
If $\dot{u}_0=0$, then the technical lemma from Appendix \ref{goubelix} applied to $y=u-m$, $\beta=u_0-m$ and $\alpha=-\ell$ leads to the stated expression for $u$. Otherwise, on a neighbourhood of $0$, we have $\dot{u}\ne0$ and differentiating \eqref{mot_BR} with respect to $s$ yields 
\[\ddot{u}=-\frac{\ell \Delta'(u)}{2}=\ell(m-u),\]
a linear ODE of order 2 whose solution reads
\[u(s)=m+(u_0-m)\cos(s\sqrt{\ell})+\dot{u}_0\frac{\sin(s\sqrt{\ell})}{\sqrt{\ell}},\]
and this function indeed satisfies \eqref{new_mot_BR} and is globally defined.

Now, if $\gamma$ is circular, then $\dot{u}_0=0$ and applying the lemma again, we find that $u(s)=m+(u_0-m)\cos(s\sqrt{\ell})$ is constant, so that $\ell=0$ or $u_0=m$. This can also be seen by analysing the potential $V=\sqrt{\ell\Delta}$. But suppose now that $u\ne m$, then $\ell=0$ and since $\dot{u}(0)=0$, we get
\[\ell=0\Longleftrightarrow L^2=\frac{1}{Q^2}\Longleftrightarrow \frac{Q^2\dot{\phi}_0^2}{\mathcal{H}}=1\Longleftrightarrow\Delta(u_0)=0,\]
contradicting the fact that $\gamma$ is exterior.

From \eqref{analytic_BR} we can deduce the analytic expression for the geodesic motion in terms of affine parameter: if the affine parameter $\lambda$ is chosen so that $\dot{\tau}_0=1$, then the motion constants and expressions of $r$ and $\phi$ as functions of $\lambda$ are given as follows:
\begin{subequations}\label{BR_affine}
\begin{align}
&\mathcal{H}=Q^2\left(\frac{\epsilon(1-2mr_0+q^2r_0^2)}{r_0^2}+\frac{\dot{r}_0^2}{r_0^2(1-2mr_0+q^2r_0^2)}+\dot{\phi}_0^2\right)\text{ and }\ell:=\frac{\dot{\phi}_0^2}{\mathcal{H}}-\frac{1}{Q^2}, \label{BR_affine1} \\
&r(\lambda)=\left(m+\left(\frac{1}{r_0}-m\right)\cos(\lambda\sqrt{\ell})-\frac{\dot{r}_0}{r_0^2}\frac{\sin(\lambda\sqrt{\ell})}{\sqrt{\ell\mathcal{H}}}\right)^{-1},\text{ as well as }\phi(\lambda)=\phi_0+\lambda\dot{\phi}_0. \label{BR_affine2}
\end{align}
\end{subequations}
Moreover, the geodesic $\gamma$ has energy $E=\frac{1-2mr_0+q^2r_0^2}{r_0^2\sqrt{\mathcal{H}}}$ and angular momentum $L=\frac{\dot{\phi}_0}{\sqrt{\mathcal{H}}}$.

\section{Implementation of orbits and ray-tracing}

We have developed a package\footnote{\url{https://github.com/arthur-garnier/euclidean_orbits_and_shadows.git}} under Python, for drawing orbits in the spacetimes \eqref{ERN} and \eqref{BRr}, as well as for ray-tracing the first ones, using a standard \textit{backward ray-tracing} method (see for instance \cite{osiris}). As mentioned in \S \ref{weier_form}, the Carlson and Coquereaux--Grossmann--Lautrup algorithms \cite{carlson,coquereaux} are used to compute geodesics in the Reissner--Nordstr\"{o}m instanton.

Let us briefly recall the backward ray-tracing procedure we use for shadowing a spacetime. First, we consider an artificial celestial hemisphere on which we project our original image, seeing it as a portion of its tangent plane parallel to the screen (and on the other side of the black hole). As a projection, we simply choose the standard and widely used \emph{equirectangular projection}, which has the advantage of taking the celestial hemisphere to a square, which we may rescale to fit our image.

Next, for each pixel of the screen, we consider the geodesic for the spacetime (null in the Lorentzian and normalized in the Euclidean case) starting at this point and with velocity directed by the line from the point observer. We then solve the geodesic equations (backwards) and we see if the ray ends in (came from) the black hole or touches the sphere somewhere. If so, the RGB value of the pixel on the screen is given by the value of the landing pixel on the sphere and we carry this process on until every pixel has been worked out. 

To simplify calculations, we make heavy use of the spherical symmetry of the spacetimes considered here: given an initial datum, use a linear rotation to bring the initial velocity (and hence the full orbit) in the plane $\{\theta=\pi/2\}$. Then, we give values to the various constants involved in the expression of the radial geodesic and, instead of computing the full orbit, we simply solve the equation $r=r_{\Sph}$ where $r_{\Sph}$ is the radius of the celestial sphere, using the Weierstrass function. This can be done rather easily, precisely and quickly: we compute some values until we cross the sphere and the first such point is used as an initial value for the Newton method. We finally rotate the result back and find our landing pixel. Thus, no full orbit calculation is required. For more details and illustrations, see \cite{garnier_CQG23}. We should mention however that since, in contrast to the Lorentzian framework, photons (i.e. null geodesics) do not properly exist in the Euclidean world, we have to trace orbits with some prescribed velocity at infinity $v$: an additional input to the program. For each pixel (i.e each particle), we provide its initial position and the direction of its velocity vector, whose norm is adjust to match the velocity $v$.

We finish our discussion by providing some figures illustrating the programs and our results.

First, concerning the orbits drawings, it should be mentioned that, while the functions of the package are designed to draw orbits in 3D, we chose to plot planar orbits here, to make the figures more readable. In each case, the mass is set to unity and we vary the charge in the different plots, distinguishing between the sub-extremal and sup-extremal cases for the charge. In each figure, the legend gives the values of the energy and angular momentum of each displayed orbit.

The Figure \ref{ERN_orbs} depicts some orbits in the spacetime \eqref{ERN}. In particular, we illustrate the results from \S \ref{energyy}: when there is a horizon, the squared energy is smaller than unity and there are circular orbits with arbitrary energy when we have a naked singularity. Notice also the presence of bounded orbits in the sup-extremal case.

In the Figures \ref{BR_orbs1} and \ref{BR_orbs2}, we show some Bertotti--Robinson orbits, with $\epsilon=\pm1$, $Q=1/2$ and $m=1$. The Figure \ref{BR_orig} displays orbits in the original Bertotti--Robinson spaces where $m=q=0$.

Finally, some particular unstable orbits may be observed in the sup-extremal Euclidean cases. In the Figure \ref{flowers} are displayed some ``flower orbits'' in a horizon-less ERN spacetime, while in the Figure \ref{stars}, we plot some ``star orbits'' in a horizon-less BR space. These can be seen as analogues of ``leaf orbits'' in the usual Lorentzian solutions, see \cite[Figs. 14, 15]{perez-giz-levin}.

Regarding the shadows, just as in \cite{garnier_CQG23}, we use the color grid shown in Figure \ref{zero} as our base picture for the shadows. In the Figure \ref{shadowsERN}, we depict the shadows of an ERN spacetime with various charges. The render time of each figure is approximately 350 seconds, on an 8-core 3.00 GHz CPU with 16 Go of RAM\footnote{We have tested the efficiency of the shadowing program by running it on (randomly generated) images with $10\times10$ to $500\times500$ pixels and we made an exponential regression on the render time. We found 
\[{\rm time}[s]\approx e^{-10.6}\times{\rm pix}^{2.5},\]
with a regression coefficient $r>0.99$.}.

To illustrate our results on deflection, we also wrote a code similar to the shadowing program, but which rather displays the total angular deflection $\delta_\alpha\phi$ (here, $\alpha$ is the angle between the initial velocity vector and the radial direction passing through the common converging point of the ``light rays''. The results are depicted in the Figures \ref{def_euc} and \ref{def_lor}. Observe that, in accordance with the Section \ref{deflection}, the deflection is always positive in the usual Schwarzschild spacetime, with a visible event horizon, while in the Euclidean world, the horizon disappears and the presence of a null and maximal deflection rings is manifest.

Lastly, in the Figure \ref{shadows_nice}, we give the shadows of ERN spacetimes on a celestial background. The original picture is from the NASA and the render time for each shadow is about 2800 seconds, the resolution of the picture being of $1080$ pixels. While the null deflection ring is impossible to pinpoint on such a picture, the maximal deflection ring is still quite noticeable. Observe moreover that, again in contrast to the usual Schwarzschild metric, the horizon-full case $Q^2<M^2$ and the naked singularity case $Q^2>M^2$ produce similar pictures, the most noticeable difference being the size $R_{\rm max}$ of the maximal deflection ring, which decreases as the charge increases. This is in accordance with the lowest order approximation of $R_{\rm max}$ provided in \eqref{charged_rings3}.

\begin{center}
\begin{figure}
\begin{subfigure}[c]{0.32\columnwidth}
\centering
\includegraphics[width=1\linewidth]{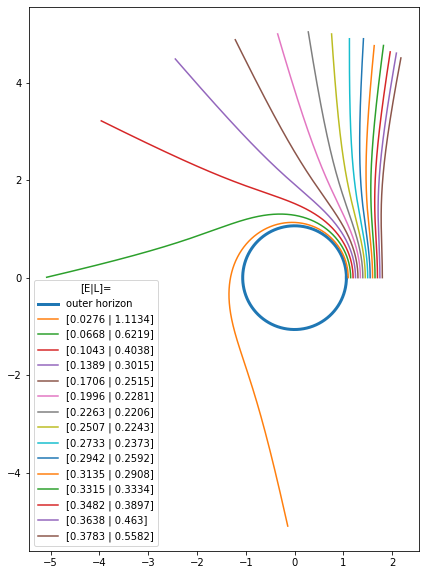}
\caption{Orbits with $Q=0.998$}
\end{subfigure}
\hfill
\begin{subfigure}[c]{0.32\columnwidth}
\centering
\includegraphics[width=0.85\linewidth]{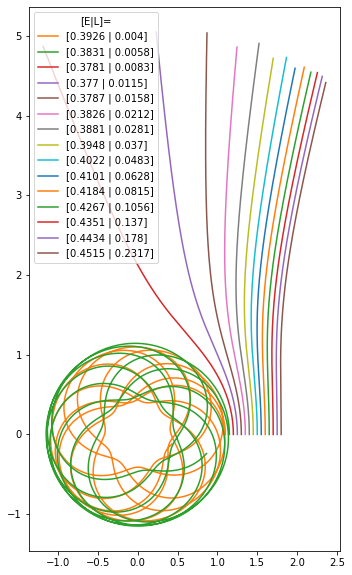}
\caption{Orbits with $Q=1.098$}
\end{subfigure}
\hfill
\begin{subfigure}[c]{0.32\columnwidth}
\centering
\includegraphics[width=1\linewidth]{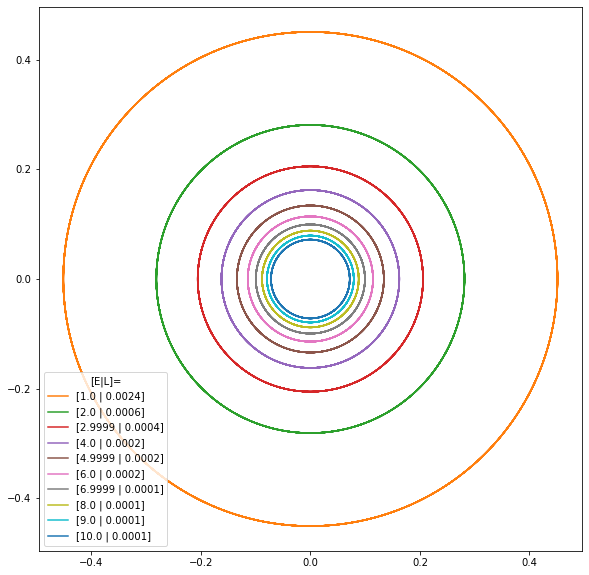}
\caption{Circular orbits with $Q=1.098$}
\end{subfigure}
\caption{Some Euclidean Reissner--Nordstr\"{o}m orbits with $M=1$.}\label{ERN_orbs}
\end{figure}
\end{center}

\begin{center}
\begin{figure}
\begin{subfigure}[c]{0.32\columnwidth}
\centering
\includegraphics[width=0.95\linewidth]{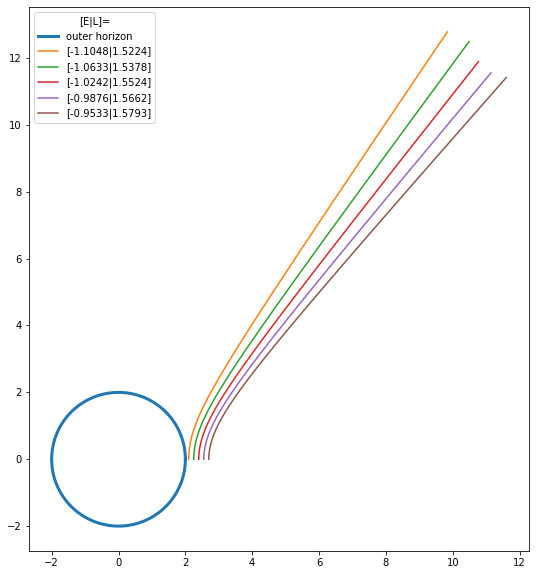}
\caption{$q=0$}
\end{subfigure}
\hfill
\begin{subfigure}[c]{0.32\columnwidth}
\centering
\includegraphics[width=1\linewidth]{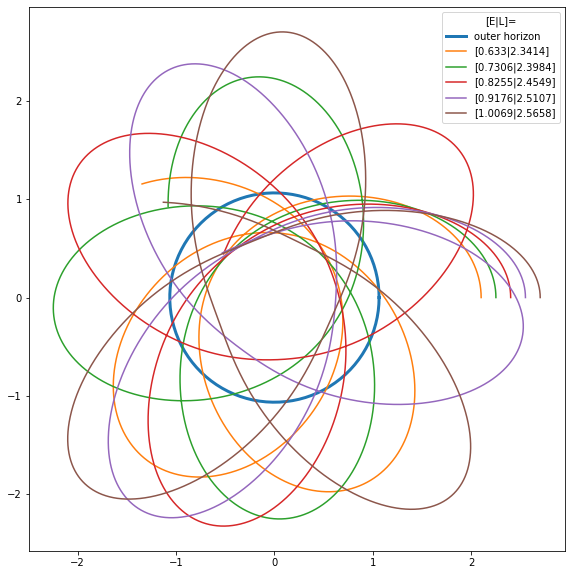}
\caption{$q=0.998$}
\end{subfigure}
\hfill
\begin{subfigure}[c]{0.32\columnwidth}
\centering
\includegraphics[width=1\linewidth]{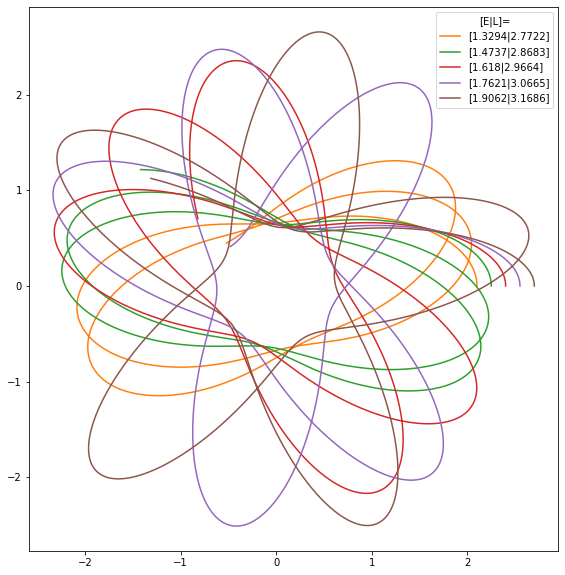}
\caption{$q=1.098$}
\end{subfigure}
\caption{Orbits in Lorentzian Bertotti-Robinson spacetimes with $Q=1/2$ and $m=1$.}\label{BR_orbs1}
\end{figure}
\end{center}

\begin{center}
\begin{figure}
\begin{subfigure}[c]{0.31\columnwidth}
\centering
\includegraphics[width=0.95\linewidth]{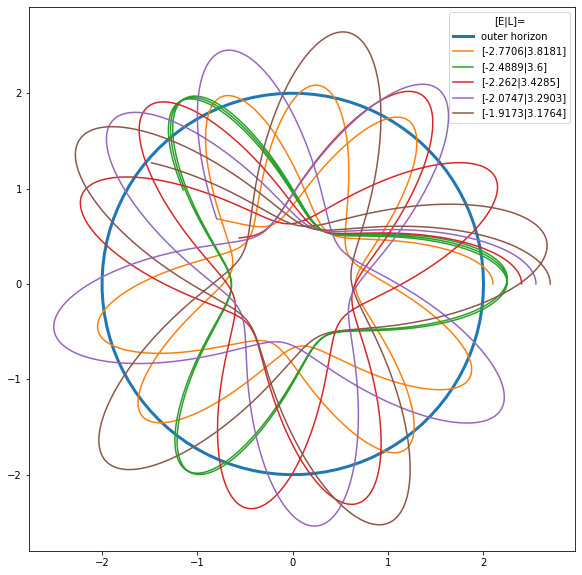}
\caption{$q=0$}
\end{subfigure}
\hfill
\begin{subfigure}[c]{0.31\columnwidth}
\centering
\includegraphics[width=0.96\linewidth]{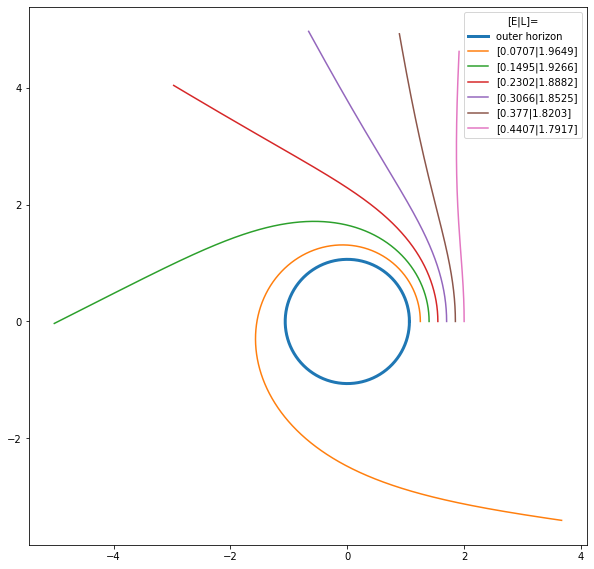}
\caption{$q=0.998$}
\end{subfigure}
\hfill
\begin{subfigure}[c]{0.31\columnwidth}
\centering
\includegraphics[width=1.1\linewidth]{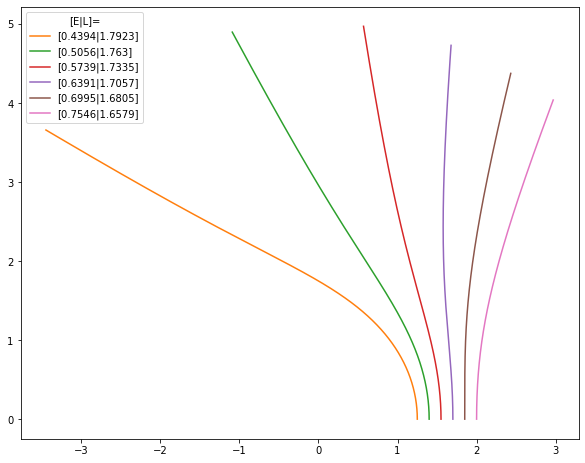}
\caption{$q=1.098$}
\end{subfigure}
\caption{Orbits in Euclidean Bertotti-Robinson spacetimes with $Q=1/2$ and $m=1$.}\label{BR_orbs2}
\end{figure}
\end{center}

\begin{center}
\begin{figure}
\begin{subfigure}[c]{0.49\columnwidth}
\centering
\includegraphics[width=1\linewidth]{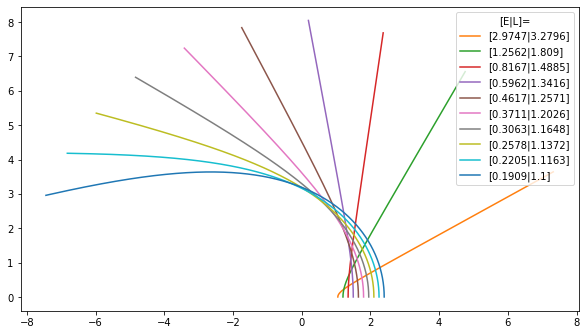}
\caption{$\epsilon=-1$}
\end{subfigure}
\hfill
\begin{subfigure}[c]{0.49\columnwidth}
\centering
\includegraphics[width=0.8\linewidth]{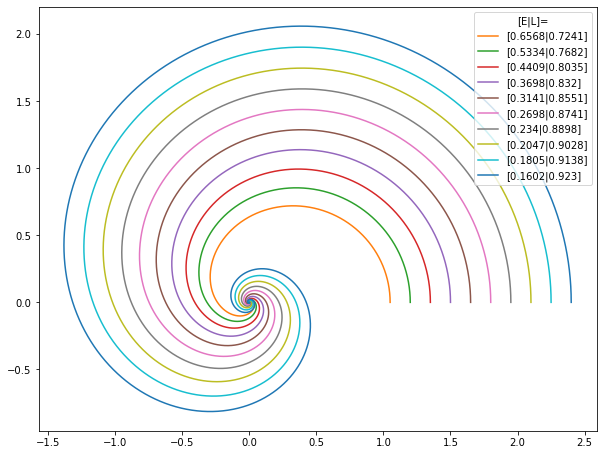}
\caption{$\epsilon=1$}
\end{subfigure}
\caption{Orbits in Bertotti--Robinson spaces with $Q=1$ and $m=q=0$.}\label{BR_orig}
\end{figure}
\end{center}

\begin{center}
\begin{figure}
\begin{subfigure}[c]{0.15\columnwidth}
\centering
\includegraphics[width=1\linewidth]{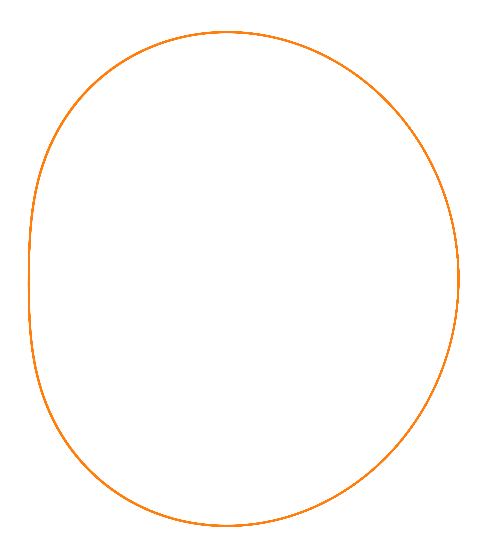}
\end{subfigure}
\hfill
\begin{subfigure}[c]{0.15\columnwidth}
\centering
\includegraphics[width=1\linewidth]{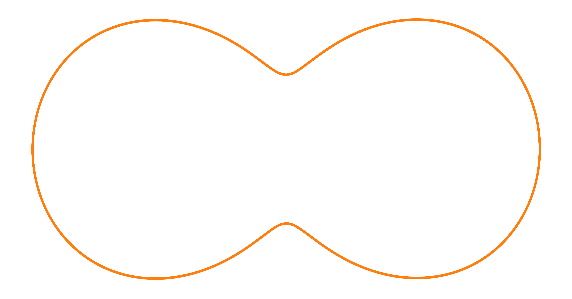}
\end{subfigure}
\hfill
\begin{subfigure}[c]{0.15\columnwidth}
\centering
\includegraphics[width=1\linewidth]{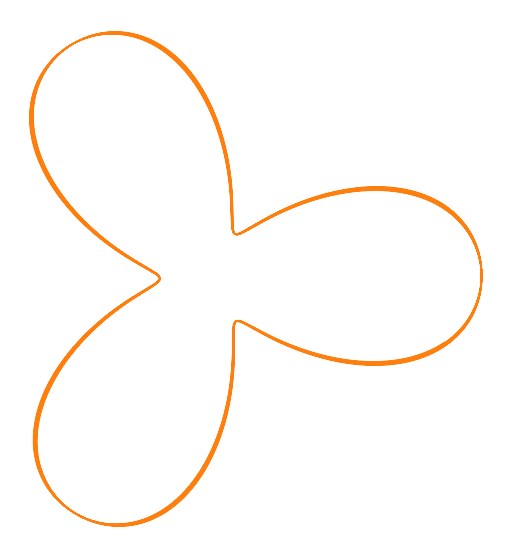}
\end{subfigure}
\hfill
\begin{subfigure}[c]{0.15\columnwidth}
\centering
\includegraphics[width=1\linewidth]{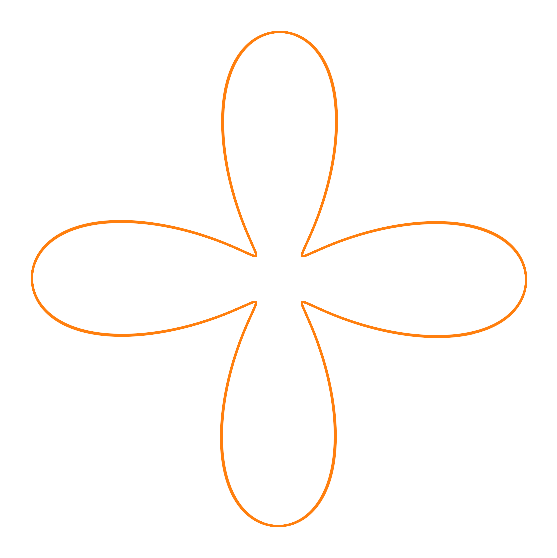}
\end{subfigure}
\hfill
\begin{subfigure}[c]{0.15\columnwidth}
\centering
\includegraphics[width=1\linewidth]{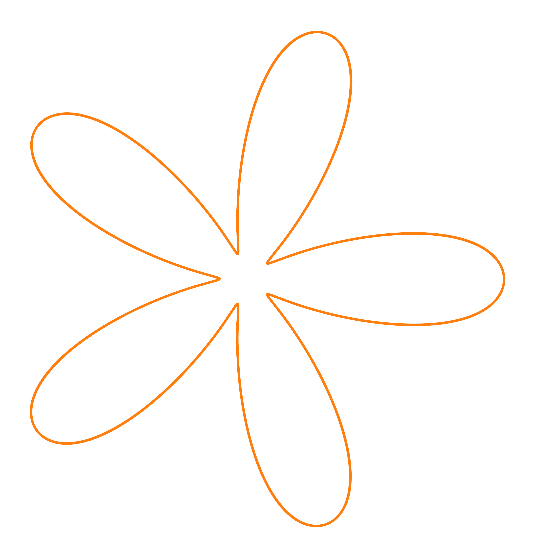}
\end{subfigure}
\hfill
\begin{subfigure}[c]{0.15\columnwidth}
\centering
\includegraphics[width=1\linewidth]{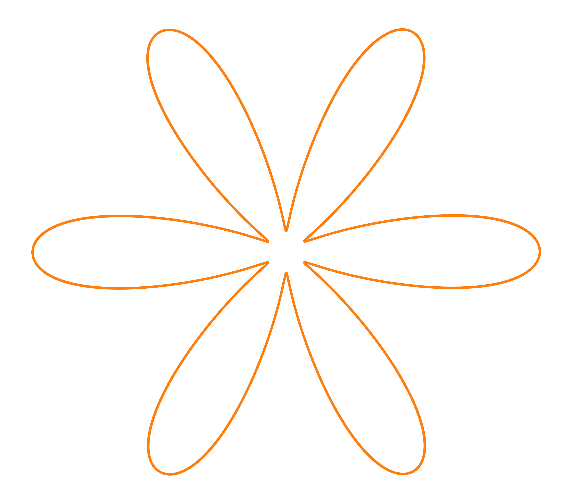}
\end{subfigure}
\caption{Flower orbits in ERN space with $M=1$ and $Q=1.098$.}\label{flowers}
\end{figure}
\end{center}

\begin{center}
\begin{figure}
\begin{subfigure}[c]{0.15\columnwidth}
\centering
\includegraphics[width=0.8\linewidth]{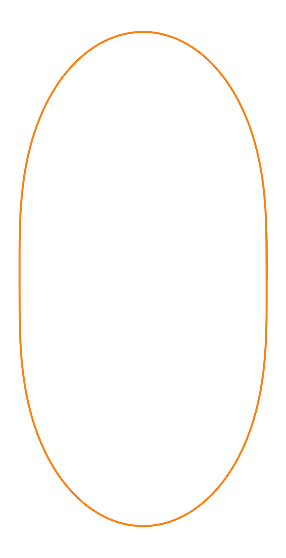}
\end{subfigure}
\hfill
\begin{subfigure}[c]{0.15\columnwidth}
\centering
\includegraphics[width=1\linewidth]{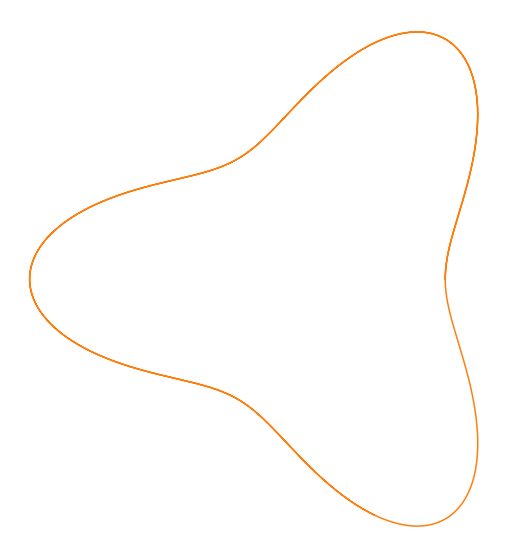}
\end{subfigure}
\hfill
\begin{subfigure}[c]{0.15\columnwidth}
\centering
\includegraphics[width=1\linewidth]{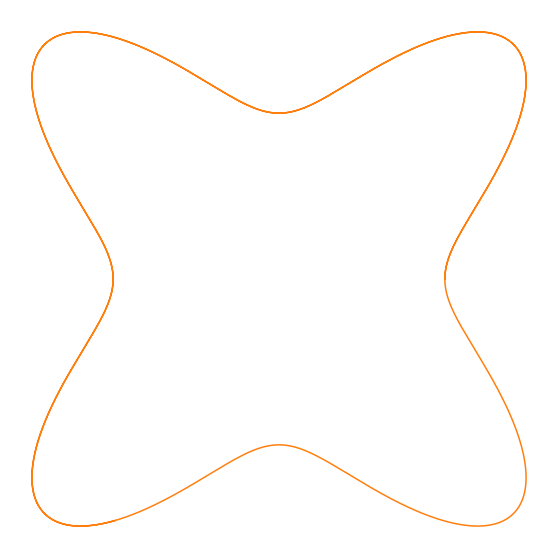}
\end{subfigure}
\hfill
\begin{subfigure}[c]{0.15\columnwidth}
\centering
\includegraphics[width=1\linewidth]{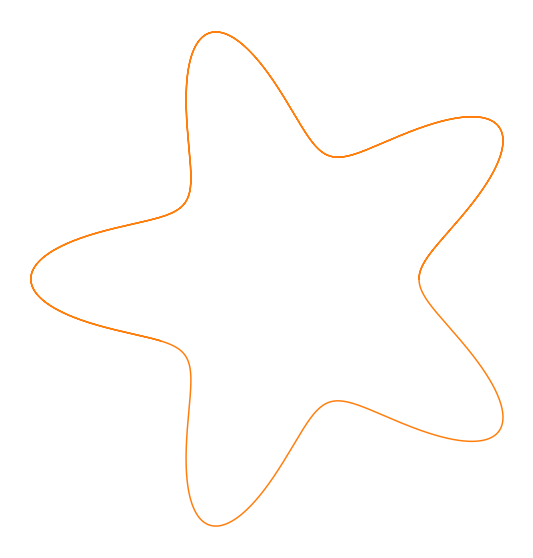}
\end{subfigure}
\hfill
\begin{subfigure}[c]{0.15\columnwidth}
\centering
\includegraphics[width=1\linewidth]{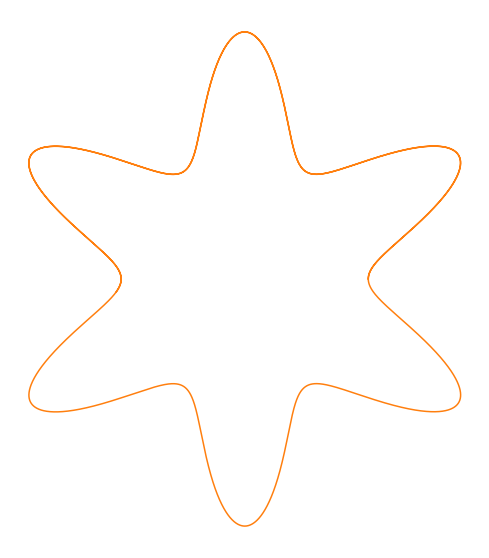}
\end{subfigure}
\hfill
\begin{subfigure}[c]{0.15\columnwidth}
\centering
\includegraphics[width=1\linewidth]{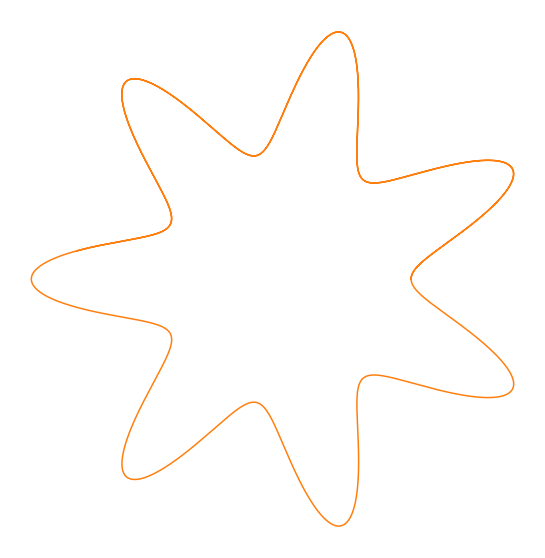}
\end{subfigure}
\caption{Star orbits in BR space with $\epsilon=-1$, $Q=1/2$, $m=1$ and $q=1.098$.}\label{stars}
\end{figure}
\end{center}
\vspace{-3cm}

\begin{center}
\begin{figure}
\begin{subfigure}[c]{0.4\textwidth}
\centering
\includegraphics[scale=0.21]{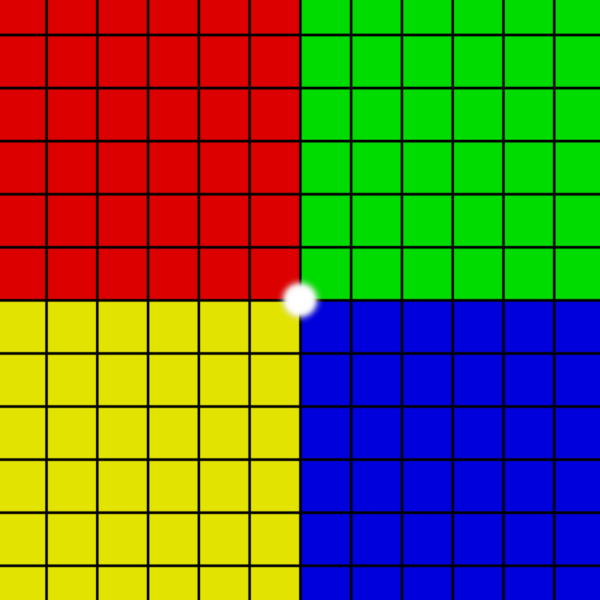}
\caption{The original $600\times600$ pixels color grid.}
\end{subfigure}
\hspace{5mm}
\begin{subfigure}[c]{0.4\textwidth}
\centering
\includegraphics[scale=0.18]{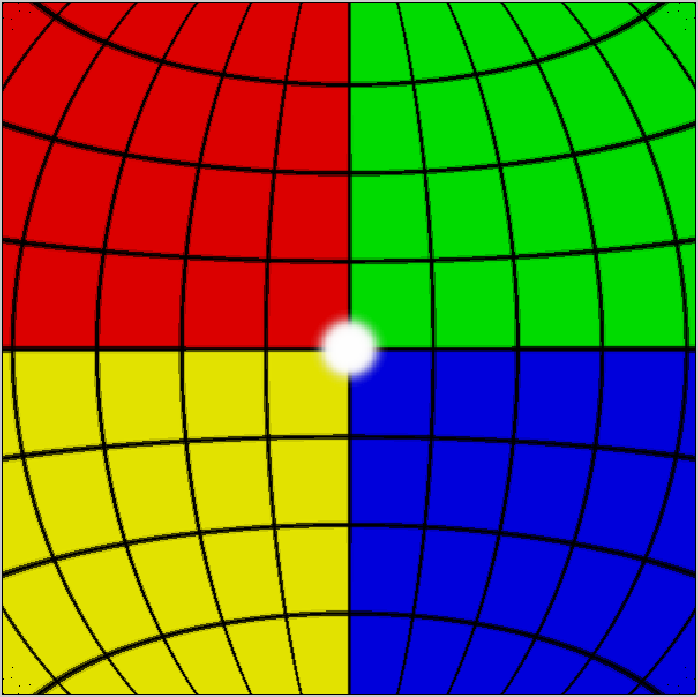}
\caption{Picture obtained with the code ($M=Q=0$, $v=1$).}
\end{subfigure}
\caption{The color grid used in the program.}\label{zero}
\end{figure}
\end{center}

\begin{center}
\begin{figure}
\begin{subfigure}[c]{0.32\columnwidth}
\centering
\includegraphics[width=1\linewidth]{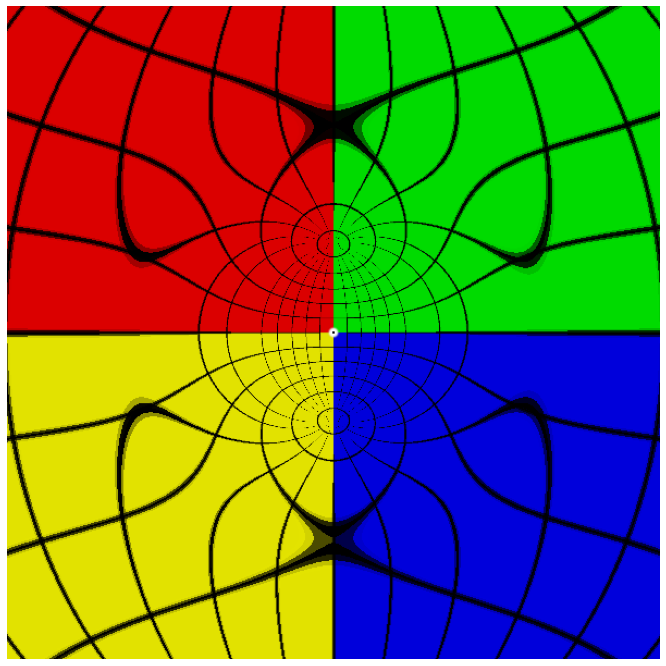}
\caption{$Q=0$}
\end{subfigure}
\hfill
\begin{subfigure}[c]{0.32\columnwidth}
\centering
\includegraphics[width=1\linewidth]{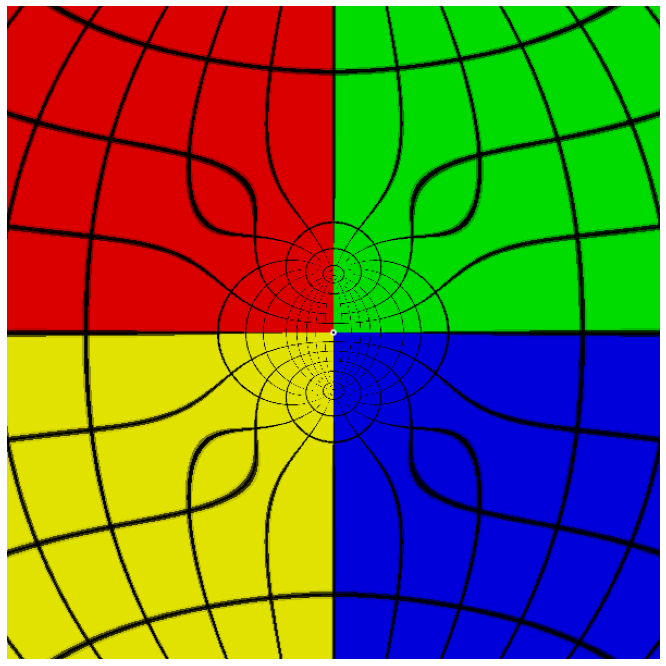}
\caption{$Q=0.998$}
\end{subfigure}
\hfill
\begin{subfigure}[c]{0.32\columnwidth}
\centering
\includegraphics[width=1\linewidth]{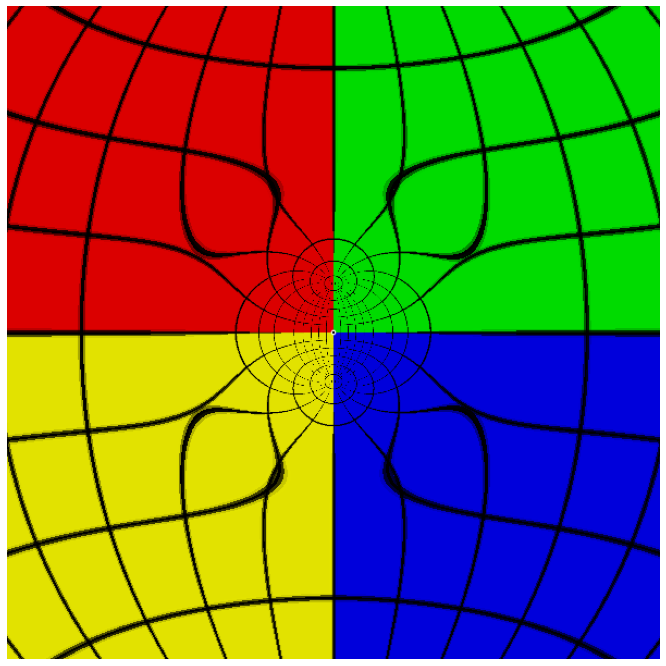}
\caption{$Q=1.098$}
\end{subfigure}
\caption{Shadows of ERN spacetimes with $M=1$ and $v=1.5$.}\label{shadowsERN}
\end{figure}
\end{center}

\begin{center}
\begin{figure}[h!]
\begin{subfigure}[c]{0.32\columnwidth}
\centering
\includegraphics[width=1\linewidth]{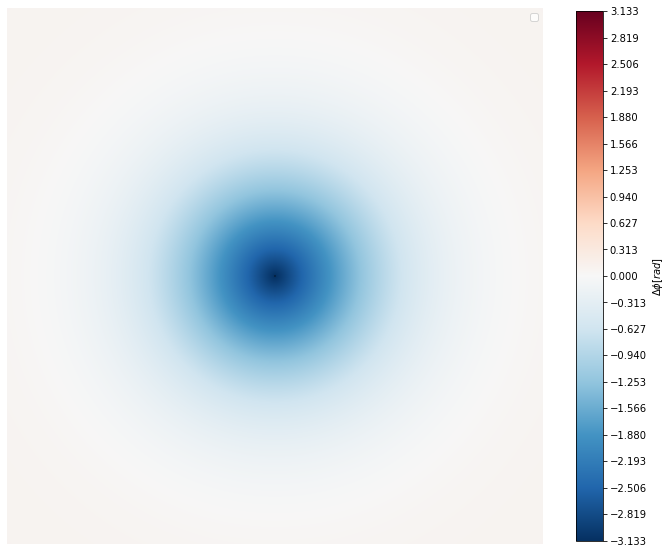}
\caption{$v=1.5$}
\end{subfigure}
\hfill
\begin{subfigure}[c]{0.32\columnwidth}
\centering
\includegraphics[width=1\linewidth]{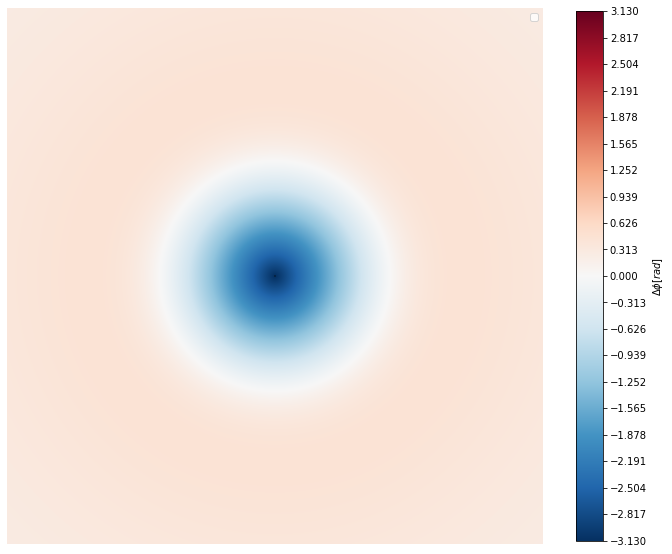}
\caption{$v=2.5$}
\end{subfigure}
\hfill
\begin{subfigure}[c]{0.32\columnwidth}
\centering
\includegraphics[width=1\linewidth]{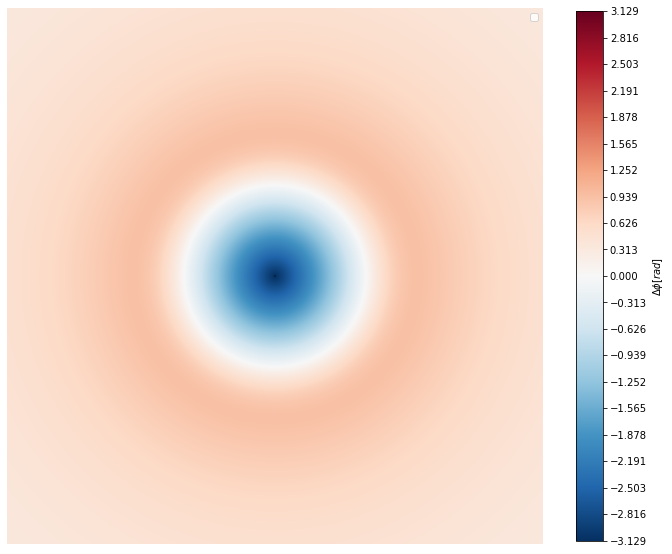}
\caption{$v=4$}
\end{subfigure}
\\
\begin{subfigure}[c]{0.32\columnwidth}
\centering
\includegraphics[width=1\linewidth]{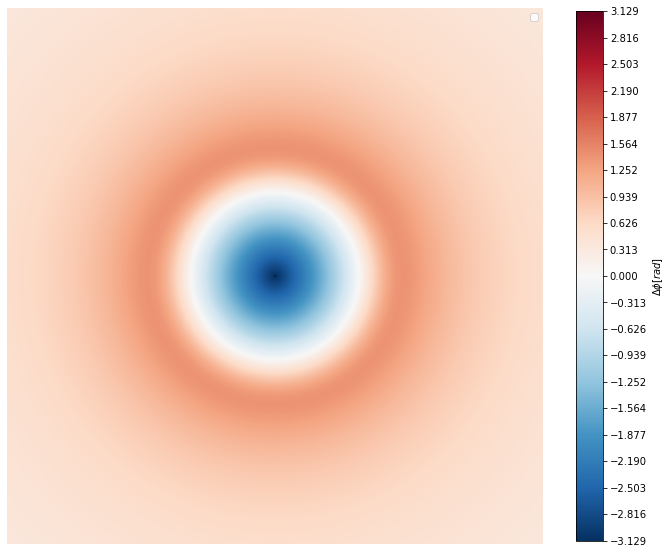}
\caption{$v=6$}
\end{subfigure}
\hfill
\begin{subfigure}[c]{0.32\columnwidth}
\centering
\includegraphics[width=1\linewidth]{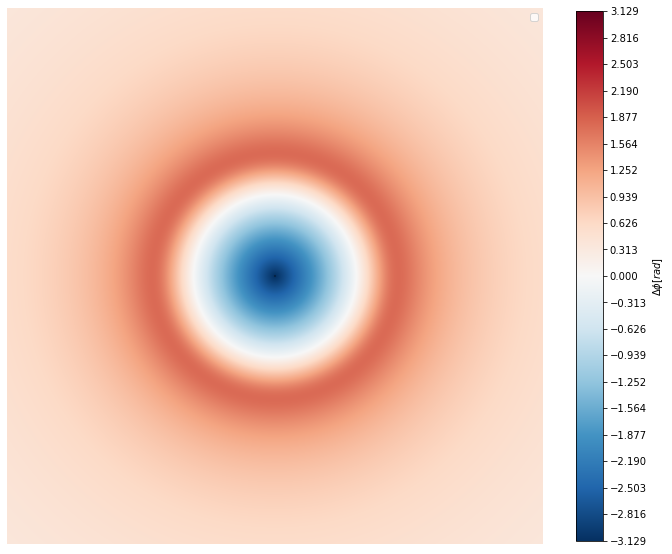}
\caption{$v=8$}
\end{subfigure}
\hfill
\begin{subfigure}[c]{0.32\columnwidth}
\centering
\includegraphics[width=1\linewidth]{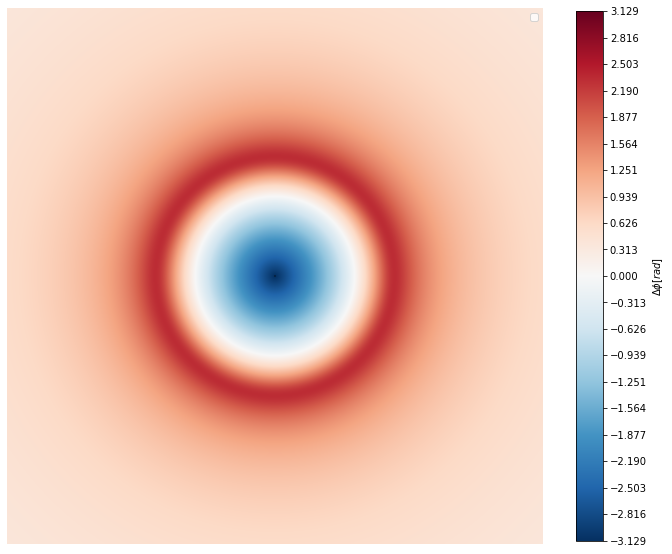}
\caption{$v=12$}
\end{subfigure}
\caption{Deflection maps in Euclidean Schwarzschild spacetime ($M=1$) with several velocities at infinity.}\label{def_euc}
\end{figure}
\end{center}

\begin{center}
\begin{figure}
\begin{subfigure}[c]{0.32\columnwidth}
\centering
\includegraphics[width=1\linewidth]{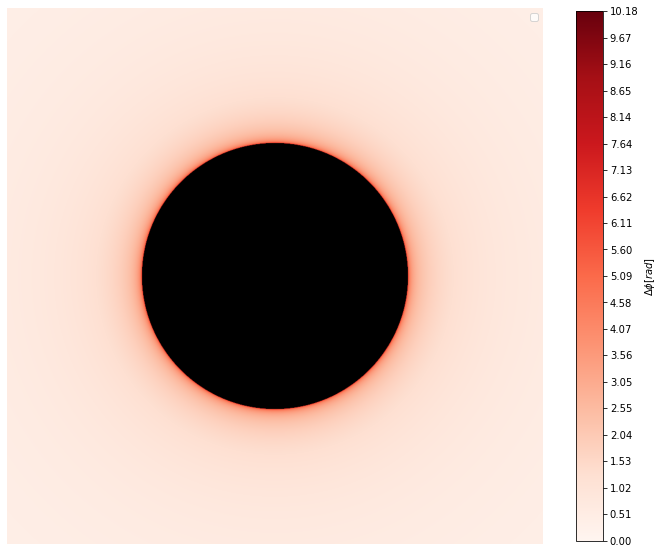}
\caption{$v=0.72$}
\end{subfigure}
\hfill
\begin{subfigure}[c]{0.32\columnwidth}
\centering
\includegraphics[width=1\linewidth]{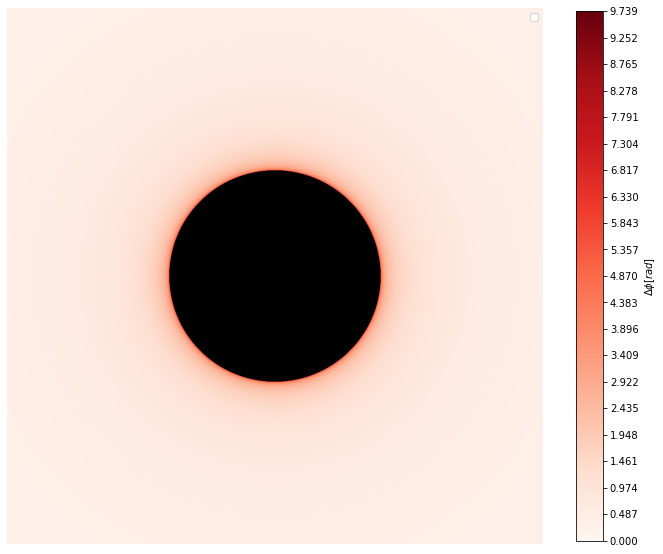}
\caption{$v=0.998$}
\end{subfigure}
\hfill
\begin{subfigure}[c]{0.32\columnwidth}
\centering
\includegraphics[width=1\linewidth]{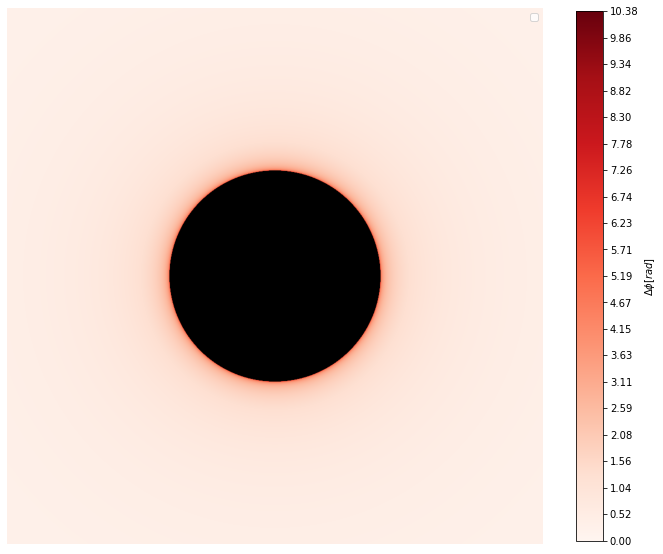}
\caption{$v=1$ (photon)}
\end{subfigure}
\caption{Deflection maps in Lorentzian Schwarzschild spacetime ($M=1$) with several velocities at infinity.}\label{def_lor}
\end{figure}
\end{center}

\begin{center}
\begin{figure}[h!]
\begin{subfigure}[c]{0.42\columnwidth}
\centering
\includegraphics[width=0.98\linewidth]{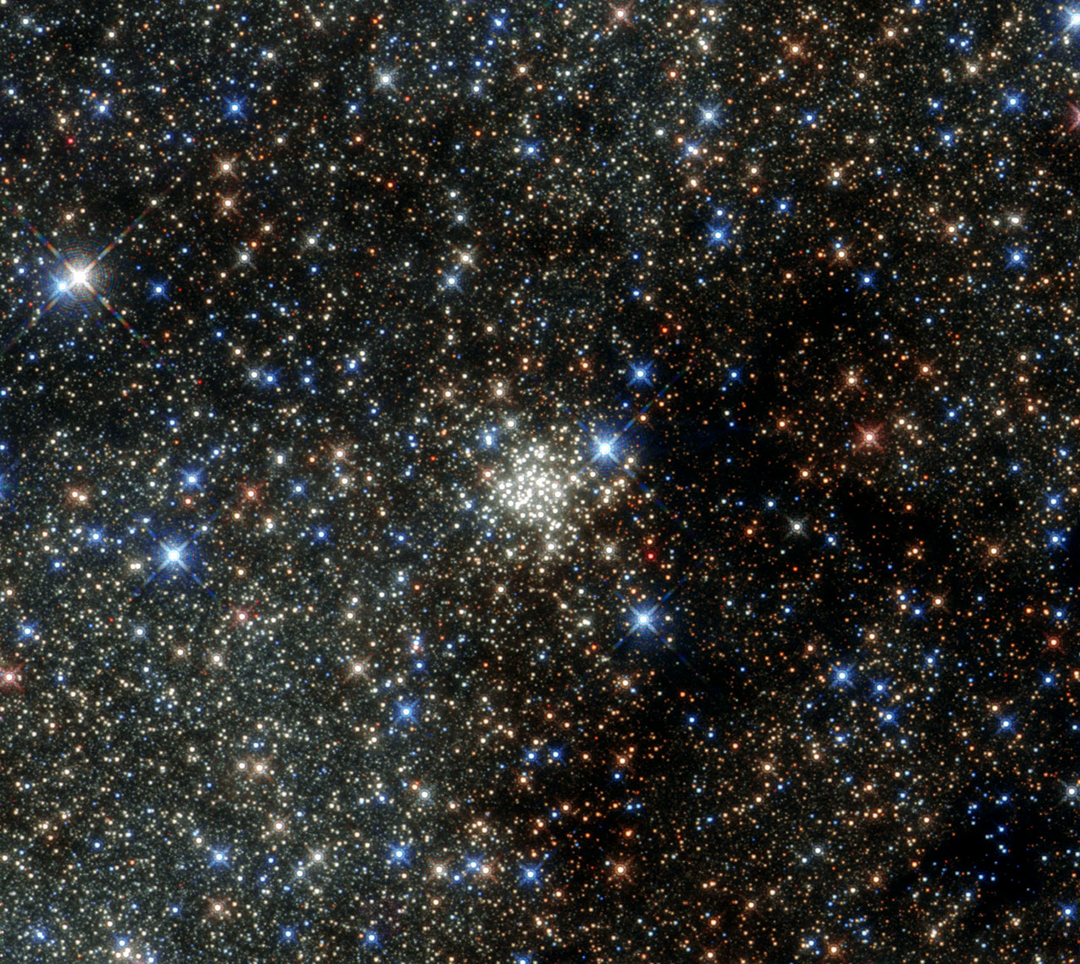}
\caption{The base picture (Arches cluster).}
\end{subfigure}
\hfill
\begin{subfigure}[c]{0.42\columnwidth}
\centering
\includegraphics[width=1\linewidth]{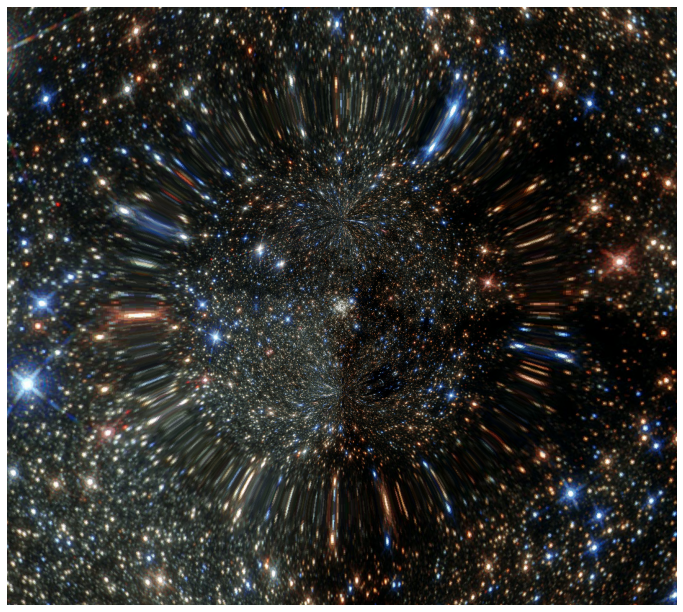}
\caption{$Q=0$}
\end{subfigure}
\\
\begin{subfigure}[c]{0.42\columnwidth}
\centering
\includegraphics[width=1\linewidth]{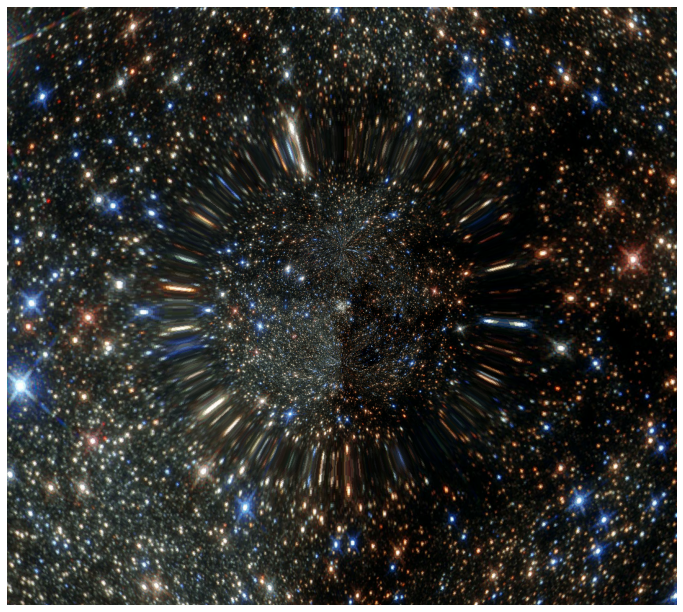}
\caption{$Q=0.998$}
\end{subfigure}
\hfill
\begin{subfigure}[c]{0.42\columnwidth}
\centering
\includegraphics[width=1\linewidth]{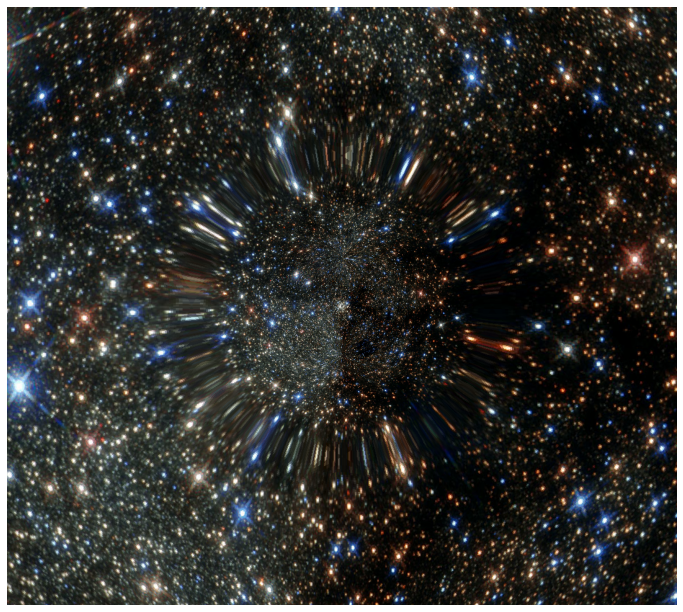}
\caption{$Q=1.098$}
\end{subfigure}
\caption{Shadows of ERN spacetimes with $M=1$ and $v=1.5$ on a celestial background (original image: \url{https://images.nasa.gov/details/GSFC_20171208_Archive_e000717}).}\label{shadows_nice}
\end{figure}
\end{center}

\newpage
\section{Conclusions}

\subsection{Summary}
In this work, we study the geodesic motion in the two different kinds of spherically symmetric electro-vacuum Euclidean solutions of Einstein's equation with complex vector potential, namely the Reissner--Nordstr\"{o}m and Bertotti--Robinson-like instantons. More precisely, after proving that these are indeed the only two possible such solutions (and that these two are incompatible), we derive the motion equations and give the main properties of the test-particle orbits, using their motion constants such as the energy and angular momentum.

We start with the Euclidean Reissner--Nordstr\"{o}m solution, for which we prove that if the spacetime has a horizon, then all exterior (non-constant) geodesics have bounded energy and, otherwise, there are circular orbits with arbitrary energy. In particular, the spacetime features a horizon if and only if there are no bounded orbits. This generalizes the results from \cite{battista-esposito22} on the Euclidean Schwarzschild solution and shows that the differences in the dynamics of the Lorentzian and Euclidean solutions is deeper than the apparent mere sign change in the line element. Then, we show that the polar radial motion equation may be put in Weierstrass form, something that we can take advantage of to numerically solve the motion equation. Moreover, this approach shows that, in the special case of the Euclidean Schwarzschild metric, the (polar) phase portrait in Binet variable\footnote{specifically, in an affine transform of the Binet variable} describes a real elliptic curve. This allows us to derive a new proof of the results from \cite{battista-esposito22}, using the elementary geometry of the curve. Furthermore, we observe that this elliptic curve is always disconnected, while in the usual Lorentzian Schwarzschild solution, it can be either connected or not.

We then study the gravitational bending of geodesics in Euclidean Schwarzschild spacetimes. More precisely, given a geodesic with closest approach radius $r=r_{\rm min}$ and velocity at infinity $v^2=\lim_{r\to\infty}({\rm d}r/{\rm d}\tau)^2$, we first provide an analytic formula for the deflection angle $\delta\phi$ that occurs between the two asymptotic directions of the geodesic, in terms of Carlson's elliptic integrals, which are used to numerically compute the deflection. Then, we give approximations for $\delta\phi$ when $r_{\rm min}\to\infty$ using the previous results of \cite{accioly-ragusa}. We observe that when $v>1$, there are particular values $\rho_0$ and $\rho_{\rm max}$ for which $\delta\phi=0$ for $r_{\rm min}=\rho_0$ and $\delta\phi$ is maximal for $r_{\rm min}=\rho_{\rm max}$. Estimates for $\rho_0$ and $\rho_{\rm max}$ are provided when $v\gtrsim1$. We then do the same for the deflection $\delta_\alpha\phi$ of geodesics passing through a point at fixed radius $r_0$ and with varying angle $0<\alpha<\pi/2$, modelling orbits coming from infinity to the eye of an observer. We give approximations, when $r_0\gg2M$ and $v\gtrsim1$, for the critical angles $\alpha_0$ and $\alpha_{\rm max}$ corresponding to the null and maximal deflection rings, as well as for the observable size of these rings. We use numerical evaluations of the analytic formulas to check the accuracy of our estimates.

Observe that the inequality $\delta\phi<0$ for close perihelia means that the central mass \textit{repels} such test-particles, while it attracts them at bigger perihelia. In particular, at fixed initial radius and velocity at infinity $v>1$, there is a critical angle $0<\alpha_0<\pi/2$ such that $\delta_{\alpha_0}\phi=0$ (see Figure \ref{brushb} for instance), meaning that the corresponding curve in the $(r,\theta,\phi)$-space is undistinguishable from that of a flat geodesic (i.e. a straight line).

Next, we do the same for the Bertotti--Robinson solution for which, after a technical lemma on ordinary differential equations, we give a general analytic solution, in terms of (hyperbolic) trigonometric functions. In particular, orbits are either periodic or unbounded and there is a unique circular orbit, which is exterior exactly when the metric has no horizon.

Finally, we provide some details on a Python code\footnote{available at \url{https://github.com/arthur-garnier/euclidean_orbits_and_shadows.git}}, designed to plot orbits in the aforementioned spacetimes, as well as to draw shadows of the Euclidean Reissner--Nordstr\"{o}m family, by the usual backward ray-tracing method. The numerical computation takes advantage of the Weierstrass form of the polar equation, coupled with the well-known Carlson \cite{carlson} and Coquereaux--Grossmann--Lautrup \cite{coquereaux} algorithms, to produce an efficient and rather fast code. This method is the same as the one used by the author in \cite{garnier_CQG23} to ray-trace Reissner--Nordstr\"{o}m--(anti)de Sitter black holes. We illustrate the code and our results by providing some figures.

\subsection{Discussion and perspectives}
In the Figures \ref{flowers} and \ref{stars}, we depict some particularly shaped unstable periodic orbits, that can be seen as analogues of the so-called ``leaf orbits'' that occur in the Lorentzian framework \cite{perez-giz-levin}. However, we have no theoretical interpretation for these orbits and this could be a matter of interest for future works.

In \S \ref{schwarzschild}, we have observed that a Euclidean Schwarzschild orbit can mathematically be interpreted as a Lorentzian space-like Schwarzchild orbit with purely imaginary energy. To decide whether this is merely a mathematical curiosity or is a manifestation of a deeper physical phenomenon is beyond the knowledge of the author and could be an interesting subject to investigate as well.

In \S \ref{deflection}, we give approximate values for the null and maximal deflection rings of a sup-photon type particle (i.e. with velocity at infinity $v\gtrsim1$). Moreover, we have numerically observed that such particular rings occur exactly when $v>1$. It would be suitable to provide a rigorous proof of this fact.

Besides the theoretical results, one aim of this work is to provide an open, transparent, user-friendly and customizable Python code to plot orbits and draw shadows of spherically symmetric (asymptotically flat) electro-vacuum instantons. This has a negative side though, which is that the code is under-optimal and still slow in comparison to other widely-used ray-tracing codes, such as \texttt{GYOTO}, \texttt{GRay} or the more recent \texttt{OSIRIS} \cite{gyoto,GRay,osiris}. Therefore, it would be interesting to improve it by using a GPU parallelization, for instance.

As in \cite{garnier_CQG23}, the shadowing program is designed to work with a common plane background image and in order to avoid distortions, we took the compromise of projecting the image on a celestial hemisphere. However, to be able to catch any possible particle that is ray-traced, we had to take the whole celestial sphere into account and we arbitrarily chose to project a mirrored version of the original image onto it. Thus, the code could be completed by producing a panoramic version of it.

\subsection*{Acknowledgments}
I am much grateful to Emmanuele Battista for bringing the subject to my attention and for many fascinating discussions, as well as for his careful reading of the present work and his useful comments.

I also warmly thank the referee for their useful and accurate comments and suggestions. Finally, a big thanks goes out to Olivier Goubet for communicating to me the nice connectedness argument proving the lemma \ref{goubelix}.

\subsection*{Code availability statement} This manuscript has associated code in a data repository. The code generated and analysed during the current study is available in the GitHub repository \url{https://github.com/arthur-garnier/euclidean_orbits_and_shadows}.

\begin{appendix}
\addcontentsline{toc}{part}{Appendices}
\section{Proof of the unicity result from \S \ref{unicity_section}}\label{proof_unicity}
Let ${\rm d}s^2=g_{\mu\nu}{\rm d}x^\mu{\rm d}x^\nu$ be a spherically symmetric solution of the field equation. As proved in \cite{hoffmann32}, a Lorentzian spherically symmetric solution of the electro-vacuum Einstein equations is necessarily static and the proof can be adapted verbatim to the Euclidean case. Therefore, the metric is diagonal: ${\rm d}s^2=g_{\mu\mu}({\rm d}x^\mu)^2$ and moreover, the functions $g_{\tau\tau}$ and $g_{rr}$ only depend on $r$. Furthermore, since the restriction to a hypersurface with constant coordinates $(\tau,r)$ must be an $SO(3)$-invariant metric on $\Sph^2$, there is a positive smooth function $\rho : ]r_+,+\infty[\longto\R^*_+$ such that $g_{\theta\theta}{\rm d}\theta^2+g_{\phi\phi}{\rm d}\phi^2=\rho(r)r^2{\rm d}\Omega^2$. Thus the metric may be reduced to the form
\[{\rm d}s^2=\rho(r)\left[u(r){\rm d}\tau^2+v(r){\rm d}r^2+r^2{\rm d}\Omega^2\right].\]

The Christoffel symbols ${\Gamma^\alpha}_{\mu\nu}=\tfrac12g^{\alpha\beta}(g_{\beta\mu,\nu}+g_{\beta\nu,\mu}-g_{\mu\nu,\beta})$ of this metric read
\[\left\{\begin{array}{ll}
{\Gamma^\tau}_{\mu\nu}=\frac{(u\rho)'}{2u\rho}\left(\begin{smallmatrix}0 & 1 & 0 & 0 \\ 1 & 0 & 0 & 0 \\ 0 & 0 & 0 & 0 \\ 0 & 0 & 0 & 0\end{smallmatrix}\right), & {\Gamma^r}_{\mu\nu}=\frac{1}{2v\rho}\left(\begin{smallmatrix}-(u\rho)' & 0 & 0 & 0 \\ 0 & (v\rho)' & 0 & 0 \\ 0 & 0 & -r(r\rho'+2\rho) & 0 \\ 0 & 0 & 0 & -r\sin^2\theta(r\rho'+2\rho)\end{smallmatrix}\right), \\[2em]
{\Gamma^\theta}_{\mu\nu}=\left(\begin{smallmatrix}0 & 0 & 0 & 0 \\ 0 & 0 & \frac{r\rho'+2\rho}{2r\rho} & 0 \\ 0 & \frac{r\rho'+2\rho}{2r\rho} & 0 & 0 \\ 0 & 0 & 0 & -\sin\theta\cos\theta\end{smallmatrix}\right), & {\Gamma^\phi}_{\mu\nu}=\left(\begin{smallmatrix} 0 & 0 & 0 & 0 \\ 0 & 0 & 0 & \frac{r\rho'+2\rho}{2r\rho} \\ 0 & 0 & 0 & \cotan\theta \\ 0 & \frac{r\rho'+2\rho}{2r\rho} & \cotan\theta & 0\end{smallmatrix}\right),\end{array}\right.\]
where we have dropped the dependence in the variable $r$ for simplicity.

On the other hand, the electro-magnetic field tensor associated to the potential $A_\mu$ is given by
\[F_{\mu\nu}=A_{\mu,\nu}-A_{\nu,\mu}=\frac{-iQ}{r^2}\left(\begin{smallmatrix}0 & 1 & 0 & 0 \\ -1 & 0 & 0 & 0 \\ 0 & 0 & 0 & 0 \\ 0 & 0 & 0 & 0\end{smallmatrix}\right)\]
so that the Maxwell equations 
\[0={F^{\mu\nu}}_{;\mu}={F^{\mu\nu}}_{,\mu}+{\Gamma^\mu}_{\mu\lambda}F^{\lambda\nu}+{\Gamma^\nu}_{\mu\lambda}F^{\mu\lambda}=\frac{1}{\sqrt{\det g}}\left(\sqrt{\det g} F^{\mu\nu}\right)_{,\mu}\]
are satisfied for all $\nu\ne\tau$ and we have ${F^{\mu\tau}}_{;\mu}={F^{r\tau}}_{,r}+{\Gamma^\mu}_{\mu r}F^{r\tau}=\frac{iQ(uv)'}{(ruv\rho)^2}$, so that the Maxwell equations hold if and only if there is a constant $k\in\R^*$ such that $v=k/u$. Therefore, the stress-energy tensor $T_{\mu\nu}$ reads
\begin{align*}
T_{\mu\nu}&=\frac{1}{\mu_0}\left(g^{\alpha\beta}F_{\alpha\mu}F_{\beta\nu}-\frac14g_{\mu\nu}F_{\alpha\beta}F^{\alpha\beta}\right)=\frac{1}{4\pi}\left(\frac{Q^2}{2kr^4\rho^2}g_{\mu\nu}-g^{\alpha\alpha}F_{\mu\alpha}F_{\alpha\nu}\right) \\
&=\frac{Q^2}{8\pi r^2\rho}{\rm diag}\left(\frac{-u}{kr^2},\frac{-1}{ur^2},\frac{1}{k},\frac{\sin^2\theta}{k}\right).
\end{align*}
Next, the Ricci tensor $R_{\mu\nu}={\Gamma^\alpha}_{\mu\nu,\alpha}-{\Gamma^\alpha}_{\mu\alpha,\nu}+{\Gamma^\alpha}_{\beta\alpha}{\Gamma^\beta}_{\mu\nu}-{\Gamma^\alpha}_{\nu\beta}{\Gamma^\beta}_{\mu\alpha}$ and Einstein tensor $G_{\mu\nu}=R_{\mu\nu}-\tfrac12Rg_{\mu\nu}$ are diagonal and have
\[\left\{\begin{array}{l}
2kr\rho R_{\tau\tau}=-u(ru\rho''+r\rho u''+2\rho'(ru'+u)+2\rho u'), \\[1em]
2ru\rho^2R_{rr}=-3ru\rho\rho''-r\rho^2u''+3ru(\rho')^2-2\rho\rho'(ru'+u)-2\rho^2u', \\[1em]
2k\rho R_{\theta\theta}=-r^2u\rho''-\rho'(r^ 2u'+4ru)+2\rho'(k-u-ru'),\end{array}\right.\]
as well as
\[\left\{\begin{array}{l}
kr^2\rho^2 G_{\tau\tau}=u(r^2u\rho\rho''-3r^2u(\rho')^2/4+r\rho\rho'(ru'+4u)/2+\rho^2(k-u-ru')),\\[1em]
G_{rr}=kG_{\tau\tau}/u^2-\rho''/\rho+3/2(\rho'/\rho)^2\\[1em]
k\rho^2 G_{\theta\theta}=r(ru\rho\rho''+r\rho^2u''/2-3ru(\rho')^2/4+\rho\rho'(ru'+u)+\rho^2u'),\end{array}\right.\]
and we have $R_{\phi\phi}=\sin^2\theta R_{\theta\theta}$ and $G_{\phi\phi}=\sin^2\theta G_{\theta\theta}$. Thus, if the Einstein equation $G_{\mu\nu}=8\pi T_{\mu\nu}$ holds, then we have
\[0=\frac{2\rho^2}{u^2}\left(u^2(G_{rr}-8\pi T_{rr})-k(G_{\tau\tau}-8\pi T_{\tau\tau})\right)=3(\rho')^2-2\rho\rho''\]
and the function $\rho$ satisfies $(\rho/\rho')'=-1/2$, so that there are constants $\alpha,\beta\in\R$ such that $\rho(r)=(\alpha r+\beta)^{-2}$. The metric now reads
\[{\rm d}s^2=(\alpha r+\beta)^{-2}(u(r){\rm d}\tau^2+ku(r)^{-1}{\rm d}r^2+r^2{\rm d}\Omega^2).\]
We then distinguish two cases:

If $\beta\ne0$, then the field equations reduce to the single equation $G_{rr}=8\pi T_{rr}$, which may be re-written as
\[\beta r^3(\alpha r+\beta)u'(r)+\beta r^2u(r)(\beta-2\alpha r)+(\alpha r+\beta)^2((\alpha^2Q^2-k)r^2+2Q^2\alpha\beta r+Q^2\beta^2)=0,\]
an equation which is equivalent to
\[\left(\frac{ru(r)}{(\alpha r+\beta)^3}\right)'=\frac{(k-\alpha^2Q^2)r^2-2Q^2\alpha\beta r-Q^2\beta^2}{\beta r^2(\alpha r+\beta)^2}\]
and we integrate the right-hand side by decomposing it into simple rational fractions. For $\alpha\ne0$, we obtain the solution
\[u(r)=\frac{(\alpha r+\beta)^3}{r}\left({\gamma}+\frac{1}{\beta}\left(\frac{Q^2}{r}-\frac{k}{\alpha(\alpha r+\beta)}\right)\right),\]
for some additional constant $\gamma\in\R$. If $\alpha=0$, then we get the equation 
\[r^3u'(r)+r^2(u(r)-k)+Q^2\beta^2=0,\]
whose solution is 
\[u(r)=k+\tilde{\alpha}/r+Q^2\beta^2/r^2\]
for some $\tilde{\alpha}\in\R$. Observe now that since $\beta\ne0$, the map $r\mapsto r(\alpha r+\beta)^{-1}$ is a diffeomorphism onto its image for any $\alpha$ and if we let
\[\tilde{\tau}:=\frac{\tau\sqrt{k}}{\beta},~R:=\frac{r}{\alpha r+\beta},~\tilde{Q}:=\frac{Q}{\sqrt{k}},~\tilde{M}:=\left\{\begin{array}{ll} \frac{1}{2}\left(\frac1\alpha+\frac{\alpha Q^2-\beta^2\gamma}{k}\right) & \text{if $\alpha\ne0$},\\[.75em] -\frac{\tilde{\alpha}}{2k\beta} & \text{otherwise,}\end{array}\right.\]
then we obtain the Reissner--Nordstr\"{o}m form of the statement. 

Assume now that $\beta=0$ (forcing $\alpha\ne0$), in which case the equation $G_{rr}=8\pi T_{rr}$ reduces to $k=\alpha^2Q^2$. Moreover, the remaining non-trivial component of the field equation is $G_{\theta\theta}=8\pi T_{\theta\theta}$, which yields
\[r^2u''(r)-2ru'(r)+2(u(r)-\alpha^2 Q^2)=0,\]
whose solution reads $u(r)=\check{\beta}r^2+\check{\gamma}r+\alpha^2 Q^2$ for constants $\tilde{\beta},\tilde{\gamma}\in\R$. Letting
\[\check{\tau}:=\alpha Q\tau,~q:=\frac{\sqrt{\check{\beta}}}{\alpha Q},~m:=-\frac{\check{\gamma}}{2\alpha^2Q^2}\]
we obtain the form
\[{\rm d}s^2=\frac{1}{\alpha^2}\left[\frac{1-2mr+q^2r^2}{r^2}{\rm d}\tilde{\tau}^2+\frac{{\rm d}r^2}{r^2(1-2mr+q^2r^2)}+{\rm d}\Omega^2\right].\]
But in the new coordinates $(\tilde{\tau},r,\theta,\phi)$, the vector potential becomes $A_\mu=-i\alpha^{-1}r^{-1}{\rm d}\tilde{\tau}$, so that we may assume that $\alpha=1/Q$, thus obtaining the stated form of the metric.

To conclude, it remains to compute the Kretschmann scalar $K=R^{\alpha\beta\mu\nu}R_{\alpha\beta\mu\nu}$ of each metric of the statement. To do this, we use the symmetry and give the non-zero components $R_{\alpha\beta\mu\nu}=g_{\alpha\lambda}({\Gamma^\lambda}_{\beta\nu,\mu}-{\Gamma^\lambda}_{\beta\mu,\nu}+{\Gamma^\lambda}_{\sigma\mu}{\Gamma^\sigma}_{\beta\nu}-{\Gamma^\lambda}_{\sigma\nu}{\Gamma^\sigma}_{\beta\mu})$ of the Riemann tensor for indices $\alpha<\beta$ and $\mu<\nu$. First, for the Bertotti--Robinson metric we have
\[R_{\tau r\tau r}=-\frac{Q^2}{r^4},~R_{\theta\phi\theta\phi}=Q^2\sin^2\theta~\Longrightarrow~K=\frac{8}{Q^4}\]
is indeed independent of $r$. For the Reissner--Nordstr\"{o}m form, we have
\[\left\{\begin{array}{ll}
R_{\tau\theta\tau\theta}=\frac{R_{\tau\phi\tau\phi}}{\sin^2\theta}=\frac{(\tilde{Q}^2-\tilde{M}R)(R^2-2\tilde{M}R+\tilde{Q}^2)}{R^4}, & R_{\tau r\tau r}=\frac{2\tilde{M}R-3\tilde{Q}^2}{R^4},\\[1em]
R_{r\theta r\theta}=\frac{R_{r\phi r_\phi}}{\sin^2\theta}=\frac{\tilde{Q}^2-\tilde{M}R}{R^2-2\tilde{M}R+\tilde{Q}^2}, & R_{\theta\phi\theta\phi}=(2\tilde{M}R-\tilde{Q}^2)\sin^2\theta.\end{array}\right.\]
and thus
\[K=\frac{8(6\tilde{M}^2R^2-12\tilde{M}\tilde{Q}^2R+7\tilde{Q}^4)}{R^8},\]
which depends on $R$ and thus the Kretschmann scalar expressed in the original coordinates depends on $r$ as well, as claimed. \qed

\section{Proof of the energy constrain in the horizon-full case (\S \ref{energyy})}\label{proof_energy}
Assume first that $Q^2\le M^2$. For simplicity, we may rescale the radius by $1/M$ and assume that $M=1$. From \eqref{mot_cons}, we have
\[E^2\ge1~\Longleftrightarrow~\Delta(r_0)^2\ge\Delta(r_0)+\Delta(r_0)^{-1}\dot{r}_0^2+r_0^2\dot{\phi}_0^2~\Longleftrightarrow~P(r_0)\le0,\]
where $P=P_{Q^2,\dot{r}_0^2,\dot{\phi}_0^2}\in\R_8[x]$ is the following polynomial of degree at most 8:
\begin{align*}
P(x)&=-x^6\Delta(x)(\Delta(x)^2-\Delta(x)-\dot{r}_0^2\Delta(x)^{-1}-x^2\dot{\phi}_0^2)\\[.5em]
&=\dot{\phi}_0^2x^8-2\dot{\phi}_0^2x^7+(Q^2\dot{\phi}_0^2+\dot{r}_0^2)x^6+2x^5-(8+Q^2)x^4 \\
&\qquad\qquad\qquad +8(1+Q^2)x^3-2Q^2(6+Q^2)x^2+6Q^2x-Q^6.
\end{align*}
We will inductively prove that $P$ is positive on the interval $I=]r_+,+\infty[$ (with $r_+=1+\sqrt{1-Q^2}$), hence contradicting the above inequality and establishing the result. Observe first that $P^{(8)}/8!=\dot{\phi}_0^2\ge0$ so $P^{(7)}$ is non-decreasing on $I$ and since
\[\tfrac{1}{7!}P^{(7)}(r_+)=2\dot{\phi}_0^2(3+4\sqrt{1-Q^2})\ge0\]
we find that $P^{(7)}\ge0$ so that $P^{(6)}$ is non-decreasing on $I$. We repeat the process, with
\[\tfrac{1}{6!}P^{(6)}(r_+)=3\dot{\phi}_0^2\left(5+9(1-Q^2)+14\sqrt{1-Q^2}\right)+\dot{r}_0^2,\]
which is positive since $\gamma$ is non-constant. Thus, we get $P^{(6)}>0$ on $I$ and $P^{(5)}$ is increasing (not only non-decreasing) on $I$. Next, we have
\begin{align*}
\tfrac{1}{5!}P^{(5)}(r_+)=&2\sqrt{1-Q^2}\left[45\dot{\phi}_0^2+25\dot{\phi}_0^2(1-Q^2)+3\dot{r}_0^2\right] \\
&+20\dot{\phi}_0^2+120\dot{\phi}_0^2(1-Q^2)+2(1+3\dot{r}_0^2)>0, \\[1em]
\tfrac{1}{4!}P^{(4)}(r_+)=&5(1+\sqrt{1-Q^2})^2\left[3(Q^2\dot{\phi}_0^2+\dot{r}_0^2)+14\dot{\phi}_0^2(1+\sqrt{1-Q^2})\sqrt{1-Q^2}\right] \\
&+10\sqrt{1-Q^2}+2-Q^2>0, \\[1em]
\tfrac{1}{3!}P^{(3)}(r_+)=&\sqrt{1-Q^2}\left[12\dot{\phi}_0^2(3Q^4+2)+60\dot{r}_0^2+4+4(1-Q^2)(78\dot{\phi}_0^2+5\dot{r}_0^2+1)\right] \\
&+20\dot{r}_0^2+4(1-Q^2)(4+15\dot{r}_0^2)+6\dot{\phi}_0^2(Q^2+(1-Q^2)(56-25Q^2))\ge0,
\end{align*}
and
\begin{align*}
\tfrac12 P''(r_+)=&\sqrt{1-Q^2}\left[60\dot{r}_0^2+4(1-Q^2)(2+15\dot{r}_0^2)+2\dot{\phi}_0^2(33Q^4-136Q^2+112)\right]+15\dot{r}_0^2Q^4 \\
&+\dot{\phi}_0^2(-13Q^6+174Q^4-384Q^2+224)+4(1-Q^2)(30\dot{r}_0^2+(1-Q^2)+1)\ge0, \\[1em]
P'(r_+)=&\sqrt{1-Q^2}\left[6\dot{r}_0^2(Q^4-12Q^2+16)+2\dot{\phi}_0^2(-Q^6+18Q^4-48Q^2+32)\right]\\
&+6\dot{r}_0^2(5Q^4-20Q^2+16)+4\dot{\phi}_0^2(-3Q^6+19Q^4-32Q^2+16)\ge0,
\end{align*}
and thus, $P'$ is increasing on $I$ and has $P'(r_+)\ge0$, so $P$ is increasing on $I$. We conclude by observing that $P(r_+)=\dot{r}_0^2r_+^6\ge0$.

To prove the second statement, assume that $Q^2>M^2$, choose $e>0$ and consider a circular (equatorial) geodesic $\gamma=(\tau,r,\pi/2,\phi)$ with energy $E=e$. Introducing the potential
\[V(r):=\sqrt{\Delta(r)\left(\frac{L^2}{r^2}-1\right)},\]
the condition that $\gamma$ is circular reads
\[V(r)^2+e^2=0=\frac{\partial V}{\partial r}.\]
The first equation imposes 
\[L=\pm r\sqrt{1-\frac{e^2}{\Delta(r)}},\]
while the second yields
\begin{align*}
0=\frac{\partial V}{\partial r}&=\frac{1}{2V}\left(\left(\frac{L^2}{r^2}-1\right)\frac{\partial\Delta}{\partial r}-\frac{2L^2\Delta}{r^3}\right) \\
&\Leftrightarrow(e^2-1)r^4+M(4-3e^2)r^3+2(Q^2(e^2-1)-2M^2)r^2+4MQ^2r-Q^4=0.
\end{align*}
This quartic has a positive root, since it evaluates to $-Q^4<0$ at $r=0$ and is positive when $r\gg0$. \qed

\section{A technical lemma on an implicit differential equation}\label{goubelix}
\begin{lem*}
For any $(\alpha,\beta)\in\R^*\times\R$, their is a unique smooth maximal non-constant solution of the incomplete initial value problem
\[\left\{\begin{array}{rcl}
\dot{y}^2 & = & \alpha(y^2-\beta^2),\\ y(0) & = & \beta.\end{array}\right.\]
Moreover this solution is globally defined and given, for $s\in\R$, by
\[y(s)=\beta\cosh(s\sqrt{\alpha}).\]
\end{lem*}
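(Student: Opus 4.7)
First I would verify existence by direct substitution: the function $y^\ast(s):=\beta\cosh(s\sqrt\alpha)$ satisfies $y^\ast(0)=\beta$ and
$$\dot y^\ast(s)^2=\alpha\beta^2\sinh^2(s\sqrt\alpha)=\alpha\bigl(\beta^2\cosh^2(s\sqrt\alpha)-\beta^2\bigr)=\alpha\bigl((y^\ast)^2-\beta^2\bigr),$$
and it is smooth, globally defined on $\R$, and non-constant as soon as $\beta\ne 0$ (the only case of interest here, since otherwise the equation $\dot y^2=\alpha y^2$ with $y(0)=0$ admits $y\equiv 0$ as its unique smooth solution).

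For uniqueness, let $y\colon I\to\R$ be any smooth maximal non-constant solution. Evaluating the equation at $s=0$ forces $\dot y(0)=0$, and differentiating $\dot y^2=\alpha(y^2-\beta^2)$ once with respect to $s$ yields the key identity
$$\dot y\,(\ddot y-\alpha y)=0\quad\text{on }I.$$
If one could upgrade this pointwise alternative to the genuine equality $\ddot y=\alpha y$ everywhere on $I$, then $y$ would satisfy a linear second-order ODE with Cauchy data $y(0)=\beta$, $\dot y(0)=0$, whose unique solution is $\beta\cosh(s\sqrt\alpha)$, already smooth on all of $\R$; maximality of $I$ would then force $I=\R$ and we would be done.

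The hard part is thus promoting $\dot y(\ddot y-\alpha y)=0$ to $\ddot y=\alpha y$ on the whole of $I$, and this is where the connectedness argument kindly communicated by O.~Goubet enters. Consider the closed subset $W:=\{s\in I:\ddot y(s)=\alpha y(s)\}$; I claim $W=I$. Suppose not: then $I\setminus W$ is a non-empty open subset of $I$, hence a disjoint union of open subintervals, and on each such component the pointwise condition $\ddot y\ne\alpha y$ together with the key identity forces $\dot y\equiv 0$. So $y$ is constant on the component, with value in $\{\pm\beta\}$ in view of the original equation. Since $y$ is globally non-constant, one can find a component admitting a boundary point $b\in I$, which must then belong to $W$ by openness of $I\setminus W$. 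Passing to the limit from within the component, continuity of $y$ and $\ddot y$ yields $y(b)=\pm\beta$ and $\ddot y(b)=0$; but $b\in W$ reads $\ddot y(b)=\alpha y(b)$, giving $0=\pm\alpha\beta$ and contradicting $\alpha\beta\ne 0$. Hence $W=I$, and the linear-ODE conclusion above applies.
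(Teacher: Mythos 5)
Your proof is correct, and it takes a genuinely different route from the one in the paper. The paper's argument (due to O.~Goubet) studies the superlevel set $U=\{s : y(s)>\beta\}$ directly: Rolle's theorem rules out finite-length components of $U$, and the first-order equation $\dot y=\pm\sqrt{\alpha(y^2-\beta^2)}$ is then integrated explicitly on each of the remaining possible configurations of $U$. You instead differentiate the constraint once to get $\dot y\,(\ddot y-\alpha y)=0$ and run the connectedness argument on the set $W=\{\ddot y=\alpha y\}$, reducing everything to the Cauchy problem for the linear equation $\ddot y=\alpha y$. Your version buys two things: it avoids the branch-choice bookkeeping in the explicit quadrature, and it treats both signs of $\alpha$ uniformly --- a real advantage here, since the lemma is applied in \S\ref{dynamics_BR} with $\alpha=-\ell$ of either sign, whereas the paper's case analysis (``if $U=\emptyset$ then $y$ is constant'') is tailored to $\alpha>0$: for $\alpha<0$ the equation forces $y^2\le\beta^2$, so $U$ is automatically empty while $y=\beta\cos(s\sqrt{-\alpha})$ is non-constant. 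The one place where you lean on the hypotheses is the continuity of $\ddot y$ at the boundary point $b$ of a component of $I\setminus W$, which is exactly where the smoothness (in fact $C^2$) assumption enters and which correctly excludes solutions glued from a constant branch $y\equiv\pm\beta$ and a $\cosh$ branch; your handling of the degenerate case $\beta=0$ is also consistent with the statement.
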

\begin{proof}
The following argument is due to Olivier Goubet. First, up to exchanging $y\leftrightarrow -y$, we may assume that $\beta\ge0$. Let $y$ be a maximal non-constant solution of the problem, defined on a (maximal, open) interval $I$. Consider the open subset
\[U:=\{s\in I~|~y(s)>\beta\}\subset\R^*\]
and suppose that $]a;b[\subset U$ is a connected component of $U$, of finite length. By continuity of $y$, we have $\beta\le \lim_{s\to a^+}y(s)=y(a)\le\beta$, so that $y(a)=\beta=y(b)$ and by Rolle's theorem, we may choose $a<t<b$ such that $0=y'(t)=\alpha(y(t)^2-\beta^2)$. This leads to $y(t)\le |y(t)|=\beta$: a contradiction.

This proves that $U$ has no connected component of finite length, leaving four possibilities:
\begin{itemize}
\item If $U=\emptyset$, then $y$ is constant.
\item If $U=\R_+^*$, then $y=\beta$ on $\R_-$ and solving the well-defined ODE $\dot{y}=\sqrt{\alpha(y^2-\beta^2)}$ on $U$ leads to 
\[y(s)=\left\{\begin{array}{cc}\beta\cosh(s\sqrt{\alpha}) & \text{if $s>0$}, \\ 1 & \text{otherwise}.\end{array}\right.\]
\item The case $U=\R_-^*$ is the symmetric of the previous one.
\item if $U=\R^*$, then we solve $\dot{y}=\pm\sqrt{\alpha(y^2-\beta^2)}$ on $\R^*_\pm$ and obtain the stated solution, the only one being non-constant and smooth.
\end{itemize}
\end{proof}
\end{appendix}

\printbibliography

\end{document}